\theoremstyle{definition}
\newtheorem{defn}{Definition}[section]
\newtheorem{thm}[defn]{Theorem}
\newtheorem{tvr}[defn]{Proposition}
\newtheorem{cor}[defn]{Corollary}
\theoremstyle{remark}
\newtheorem{example}{Example}[section]
\newcommand{\id}{\mathfrak{1}}
\newlength{\defbaselineskip}
\newcommand{\setlinespacing}[1]%
           {\setlength{\baselineskip}{#1 \defbaselineskip}}
\renewcommand{\i}{\mathrm{i}}
\newcommand{\map}{\rightarrow}
\newcommand{\q}{\quad}
\renewcommand{\epsilon}{\varepsilon}
\newcommand{\ep}{\varepsilon}
\newcommand{\la}{\lambda}
\newcommand{\al}{\alpha}
\newcommand{\om}{\omega}
\renewcommand{\rho}{\varrho}
\renewcommand{\phi}{\varphi}
\newcommand{\R}{{\mathbb{R}}}
\newcommand{\N}{{\mathbb N}}
\newcommand{\Com}{{\mathbb C}}
\newcommand{\Z}{\mathbb{Z}}
\newcommand{\C}{\mathbb{C}}
\newcommand{\set}[2]{\left\{#1 \, |\, #2 \right\}}
\newcommand{\setb}[2]{\left\{#1 \, \mid\, #2 \right\}}
\newcommand{\abs}[1]{\left\vert#1\right\vert}
\newcommand{\wt}{\widetilde}
\newcommand{\sca}[2]{\langle #1,\, #2\rangle}
\begin{document}

\title[Generalized discrete orbit function transforms]
{Generalized discrete orbit function transforms of affine Weyl groups}

\author[T. Czy\.zycki]{Tomasz Czy\.zycki$^{1}$}
\author[J. Hrivn\'{a}k]{Ji\v{r}\'{i} Hrivn\'{a}k$^{2}$}

\date{\today}
%%%%%%%%%%%%%%%%%%%%%%%%%%%%%%%%%%%%%%%%%%%%%%%%%%%%%%%%%%%
\begin{abstract}\small The affine Weyl groups with their corresponding four types of orbit functions are considered. Two independent admissible shifts, which preserve the symmetries of the weight and the dual weight lattices, are classified. Finite subsets of the shifted weight and the shifted dual weight lattices, which serve as a sampling grid and a set of labels of the orbit functions, respectively, are introduced. The complete sets of discretely orthogonal orbit functions over the sampling grids are found and the corresponding discrete Fourier transforms are formulated. The eight standard one-dimensional discrete cosine and sine transforms form special cases of the presented transforms.   
\end{abstract}

\maketitle
\noindent
$^1$ Institute of Mathematics, University of Bia\l ystok,
Akademicka 2, 15--267 Bia\l ystok, Poland \\
$^2$ Department of Physics, Faculty of Nuclear Sciences and Physical Engineering, Czech Technical University in Prague, B\v{r}ehov\'a~7, CZ--115 19 Prague, Czech Republic
\vspace{10pt}

\noindent
\textit{E-mail:} jiri.hrivnak@fjfi.cvut.cz, tomczyz@math.uwb.edu.pl

%\tableofcontents
%%%%%%%%%%%%%%%%%%%%%%%%%%%%%%%%%%%%%%%%%%%%%%%%%%%%%%%%%%%%%%%
%\setlinespacing{0.98}
\section{Introduction}
%%%%%%%%%%%%%%%%%%%%%%%%%%%%%%%%%%%%%%%%%%%%%%%%%%%%%%%%%%%%%%

This paper aims to extend the results of the discrete Fourier calculus of orbit functions \cite{HP,HMP}. These four types of orbit functions  are induced by the affine Weyl groups of the compact simple Lie groups. Firstly, notions of an admissible shift and an admissible dual shift are introduced and classified. Then two finite subsets of the shifted dual weight lattice and of the shifted weight lattice are introduced --- the first set serves as a sampling grid and the second as a complete set of labels of the discretely orthogonal orbit functions. 

The four types of orbit functions are a generalization of the standard symmetric and antisymmetric orbit sums \cite{Bour}. They can also be viewed as multidimensional generalizations of one-dimensional cosine and sine functions with the symmetry and periodicity determined by the affine Weyl groups \cite{H2}. Indeed, for the case of rank one are the standard cosine and sine functions special cases of orbit functions. For a detailed review of the properties of the symmetric and antisymmetric orbit functions see \cite{KP1,KP2} and the references therein. This paper focuses on discrete Fourier transforms of the orbit functions \cite{HP,HMP,MP2}. 

For the case of rank one, the discrete Fourier transforms from \cite{HP} become one-dimensional discrete cosine and sine transforms known as DCT--I and DST--I \cite{Brit}. The two transforms DCT--I and DST--I constitute only a part of the collection of one-dimensional cosine and sine transforms. The other six most ubiquitous transforms DCT--II, DCT--III, DCT--IV and DST--II, DST--III, DST--IV are obtained by imposing different boundary conditions on the corresponding difference equations \cite{Brit}. The crucial fact is that the resulting grids and the resulting labels of the functions are shifted from their original position depending on the given boundary conditions while preserving their symmetry. This observation serves as a starting point for deriving and generalizing these transforms in the present paper. Besides the standard multidimensional Cartesian product generalizations of DCT's and DST's, other approaches, which also develop multidimensional analogues of the four types of sine and cosine transforms, are based on antisymmetric and symmetric trigonometric functions. 

The (anti)symmetric cosine and sine functions are introduced in \cite{KPtrig} and are directly connected to the four types of orbit functions of the series of the root systems $B_n$ and $C_n$ --- see for instance the three-dimensional relations in \cite{HaHrPa2}. These functions arise as Cartesian products of one-dimensional trigonometric functions which are symmetrized with respect to the permutation group $S_n$. Four types of generalized cosine and sine transforms of both symmetric and antisymmetric types are formulated in \cite{KPtrig} and detailed for two-variable functions in \cite{HP2d,sin}. This approach, however, relies on the validity of the one-dimensional DCT's and DST's and obtained multidimensional shifted discrete grids are subsets of the Cartesian product grids, used in the standard multidimensional version of DCT's and DST's. This paper uses a different approach --- generalized DCT's and DST's are derived independently on their one-dimensional versions and the resulting grids are not subsets of Cartesian product grids.          

The physical motivation of this work stems from widespread use of various types of one-dimensional and multidimensional DCT's and DST's. A textbook case of using one-dimensional DCT's and DST's is description of modes of a beaded string where the type of the transform is determined by the positions of the beads and their boundary conditions. Similar straightforward applications might be expected in two-dimensional and three-dimensional settings. Other more involved applications include general interpolation methods, for instance in chemical physics \cite{YuAnNy},
quantum algorithms \cite{Cao} and quantum communication processes \cite{Bose}.

From a mathematical point of view, the present approach uses the following two types of homomorphisms \cite{MP2,HP,HP2}, 
\begin{itemize}
\item the standard retraction homomorphisms $\psi, \widehat \psi$ of the affine and the dual affine Weyl groups,
\item the four sign homomorphisms $\sigma$, which determine the type of the special function,  
\end{itemize}
and adds two types of homomorphisms related to two shifts $\rho^\vee$ and $ \rho$,
\begin{itemize}
\item the shift and the dual shift homomorphisms  $\widehat{\theta}_{\rho^\vee}, \theta_{\rho}$, which control the affine boundaries conditions,
\item the two $\gamma-$homomorphisms, which combine the previous three types of homomorphisms and control the behavior of the orbit functions on the boundaries.    
\end{itemize}

In Section 2, the notation and pertinent properties concerning the affine Weyl groups and the corresponding lattices are reviewed. The admissible shifts and the dual admissible shifts are classified and the four types of homomorphisms are presented. In Section 3, the four types of orbit functions are recalled and their symmetries, depending on the shifts, are determined. The finite set of grid of points and the labels are introduced and the numbers of their points calculated. In Section 4, the discrete orthogonality of orbit functions is shown and the corresponding discrete Fourier transforms presented.
%%%%%%%%%%%%%%%%%%%%%%%%%%%%%%%%%%%%%%%%%%%%%%%%%%%%%%%%%%%%%%%
\section{Pertinent properties of affine Weyl groups}
%%%%%%%%%%%%%%%%%%%%%%%%%%%%%%%%%%%%%%%%%%%%%%%%%%%%%%%%%%%%%%%

%%%%%%%%%%%%%%%%%%%%%%%%%%%%%%%%%%%%%%%%%%%%%%%%%%%%%%%%%%%%%%%
\subsection{Roots and reflections}\

The notation, established in \cite{HP}, is used. Recall that, to the simple Lie algebra of rank $n$, corresponds the set of simple roots $\Delta=\{\al_1,\dots,\al_n\}$ of  the root system $\Pi$ \cite{BB,Bour,H2}. The set $\Delta$ spans the Euclidean space $\R^n$, with the scalar product denoted by $\sca{\,}{\,}$. The set of simple roots determines partial ordering $\leq$ on $\R^n$ ---  for $\la,\nu \in \R^n$ it holds that  $\nu\leq \la$ if and only if $\la-\nu = k_1\al_1+\dots+ k_n \al_n$ with $k_i \geq 0$ for all $i\in \{1,\dots,n\}$.   The root system $\Pi$ and its set of simple roots $\Delta$ can be defined independently on Lie theory and such sets which correspond to compact simple Lie groups are called crystallographic \cite{H2}. There are two types of sets of simple roots --- the first type with roots of only one length, denoted standardly as $A_n, \, n\geq 1$, $D_n, \, n\geq 4$, $E_6$, $E_7$, $E_8$ and the second type
with two different lengths of roots, denoted $B_n,\, n\geq 3$, $C_n,\, n\geq 2$, $G_2$ and $F_4$. For the second type systems, the set of simple roots consists of short simple roots $\Delta_s$ and long simple roots $\Delta_l$, i.e. the following disjoint decomposition is given,
\begin{equation}\label{sl}
\Delta=\Delta_s\cup\Delta_l.
\end{equation}
The standard objects, related to the set $\Delta\subset \Pi$, are the following \cite{BB,H2}:
\begin{itemize}
\item the highest root $\xi \in \Pi$ with respect to the partial ordering $\leq$ restricted on $\Pi$
\item the marks $m_1,\dots,m_n$ of the highest root $\xi\equiv -\al_0=m_1\al_1+\dots+m_n\al_n$,
\item
the Coxeter number $m=1+m_1+\dots+m_n$,
\item
the root lattice $Q=\Z\al_1+\dots+\Z\al_n $,
\item
the $\Z$-dual lattice to $Q$,
\begin{equation*}
 P^{\vee}=\set{\om^{\vee}\in \R^n}{\sca{\om^{\vee}}{\al}\in\Z,\, \forall \al \in \Delta}=\Z \om_1^{\vee}+\dots +\Z \om_n^{\vee},
 \end{equation*}
with 
\begin{equation}\label{aldom} 
\sca{\al_i}{ \om_j^{\vee}}=\delta_{ij},
\end{equation}
\item
the dual root lattice $Q^{\vee}=\Z \al_1^{\vee}+\dots +\Z \al^{\vee}_n$, where $\al^{\vee}_i=2\al_i/\sca{\al_i}{\al_i}$, $i\in \{1,\dots,n\}$,
\item
the dual marks $m^{\vee}_1, \dots ,m^{\vee}_n$ of the highest dual root $\eta\equiv -\al_0^{\vee}= m_1^{\vee}\al_1^{\vee} + \dots + m_n^{\vee} \al_n^{\vee}$; the marks and the dual marks are summarized in Table 1 in \cite{HP},
\item the $\Z$-dual lattice to $Q^\vee$
\begin{equation*}
 P=\set{\om\in \R^n}{\sca{\om}{\al^{\vee}}\in\Z,\, \forall \al^{\vee} \in Q^\vee}=\Z \om_1+\dots +\Z \om_n,
\end{equation*}
\item
the Cartan matrix $C$ with elements $C_{ij}=\sca{\al_i}{\al^{\vee}_j}$
and with the property
\begin{equation}\label{Cartanm}
 \al^\vee_j = C_{kj} \om_k^{\vee},
\end{equation}
\item the index of connection $c$ of $\Pi$ equal to the  determinant of the Cartan matrix $C$,
\begin{equation}\label{Center}
 c=\det C,
\end{equation}
and determining the orders of the isomorphic quotient groups $P/Q $ and $P^\vee/Q^\vee $,
 \begin{equation*}
 c=|P/Q | =|P^\vee/Q^\vee |.
\end{equation*}
\end{itemize}

The $n$ reflections $r_\al$, $\al\in\Delta$ in $(n-1)$-dimensional mirrors orthogonal to simple roots intersecting at the origin are given explicitly for $a\in\R^n$ by
\begin{equation}\label{refl}
r_{\al}a=a-\sca{\al}{a}\al^\vee. 
\end{equation}
and the affine reflection $r_0$ with respect to the highest root $\xi$ is given by
\begin{equation}\label{aWeyl}
r_0 a=r_\xi a + \frac{2\xi}{\sca{\xi}{\xi}}\,,\qquad
r_{\xi}a=a-\frac{2\sca{a}{\xi} }{\sca{\xi}{\xi}}\xi\,,\qquad a\in\R^n\,.
\end{equation}
The set of reflections $r_1\equiv r_{\al_1}, \, \dots, r_n\equiv r_{\al_n}$, together with the affine reflection $r_0$,  is denoted by $R$,
\begin{equation}\label{R}
R=\{ r_0,r_1,\dots,r_n \}.
\end{equation}
The dual affine reflection $r_0^{\vee}$, with respect to the dual highest root $\eta$, is given by
\begin{equation}\label{daWeyl}
r_0^{\vee} a=r_{\eta} a + \frac{2\eta}{\sca{\eta}{\eta}}, \q r_{\eta}a=a-\frac{2\sca{a}{\eta} }{\sca{\eta}{\eta}}\eta,\q a\in\R^n.
\end{equation}
The set of reflections $r^\vee_1\equiv r_{\al_1}, \, \dots, r^\vee_n\equiv r_{\al_n}$, together with the dual affine reflection $r^\vee_0$ is denoted by $R^\vee$,
\begin{equation*}%\label{Rd}
R^{\vee}=\{ r_0^{\vee},r^\vee_1,\dots,r^\vee_n \}.
\end{equation*}

\subsection{Weyl group and affine Weyl group}\

The Weyl group $W$ is generated by $n$ reflections $r_\al$, $\al\in\Delta$. The set $R$ of $n+1$ generators \eqref{R} generates the affine Weyl group $W^{\mathrm{aff}}$. Except for the case $A_1$, the Coxeter presentation of  $W^{\mathrm{aff}}$ is of the form
\begin{equation}\label{CoxW}
	W^{\mathrm{aff}}=\langle R \mid  (r_i r_j)^{m_{ij}}=1 \rangle ,
\end{equation}
where the numbers $m_{ij}$, $i,j\in\{0,1,\dots, n\}$ are given by the extended Coxeter-Dynkin diagrams --- see e.~g.~\cite{HP}. It holds that $m_{ii}=1$ and if the $i$th and $j$th nodes in the diagram are not connected then $m_{ij}=2$; otherwise the single, double or triple vertices between the nodes indicate  $m_{ij}$ equal to $3, 4$ or $6$, respectively.

Moreover, the affine Weyl group $W^{\mathrm{aff}}$ is the semidirect product of the Abelian group of translations $T(Q^\vee)$ by shifts from $Q^\vee$ and of the Weyl group~$W$,
\begin{equation*}%\label{direct}
 W^{\mathrm{aff}}= T(Q^\vee) \rtimes W.
\end{equation*}
Thus, for any $w^{\mathrm{aff}}\in W^{\mathrm{aff}}$, there exist a unique $w\in W$ and a unique shift $T(q^{\vee})$ such that $w^{\mathrm{aff}}=T(q^{\vee})w$.
Taking any $w^{\mathrm{aff}}=T(q^\vee)w\in {W}^{\mathrm{aff}}$, the retraction homomorphism $\psi:{W}^{\mathrm{aff}}\map W$  and the mapping $\tau:{W}^{\mathrm{aff}}\map Q^\vee$ are given by
\begin{align}
\psi(w^{\mathrm{aff}}) &=w, \label{ret}\\
\tau(w^{\mathrm{aff}}) &=q^\vee \label{retq}.
\end{align}

The fundamental domain $F$ of $W^{\mathrm{aff}}$, which consists of precisely one point of each $W^{\mathrm{aff}}$-orbit, is the convex hull of the points $\left\{ 0, \frac{\om^{\vee}_1}{m_1},\dots,\frac{\om^{\vee}_n}{m_n} \right\}$. Considering $n+1$ real parameters $y_0,\dots, y_n\geq 0$, we have
\begin{align}
F &=\setb{y_1\om^{\vee}_1+\dots+y_n\om^{\vee}_n}{y_0+y_1 m_1+\dots+y_n m_n=1  }. \label{deffun}
\end{align}
Let us denote the isotropy subgroup of a point $a\in\R^n$ and its order by
\begin{equation*}
\mathrm{Stab}_{W^{\mathrm{aff}}}(a) = \setb{w^{\mathrm{aff}}\in W^{\mathrm{aff}}}{w^{\mathrm{aff}}a=a},\q h(a)=|\mathrm{Stab}_{W^{\mathrm{aff}}}(a)|,
\end{equation*}
and define a function $\ep:\R^n\map\N$ by the relation
\begin{equation}\label{epR}
\ep(a)=\frac{|W|}{h(a)}.
\end{equation}
Since the stabilizers $\mathrm{Stab}_{W^{\mathrm{aff}}}(a) $ and $\mathrm{Stab}_{W^{\mathrm{aff}}}(w^{\mathrm{aff}}a) $ are for any $w^{\mathrm{aff}}\in W^{\mathrm{aff}}$ conjugated, one obtains that
\begin{equation}\label{epshift}
\ep(a)=\ep(w^{\mathrm{aff}}a),\q w^{\mathrm{aff}}\in W^{\mathrm{aff}}.
\end{equation}

Recall that the stabilizer $\mathrm{Stab}_{W^{\mathrm{aff}}}(a)$
of a point $a=y_1\om^{\vee}_1+\dots+y_n\om^{\vee}_n\in F$ is trivial, $\mathrm{Stab}_{W^{\mathrm{aff}}}(a)=1$ if the point $a$ is in the interior of $F$, $a\in \mathrm{int}(F)$. Otherwise the group $\mathrm{Stab}_{W^{\mathrm{aff}}}(a)$
is generated by such $r_i$ for which $y_i=0$, $i=0,\dots,n$.

Considering the standard action of $W$ on the torus $\R^n/Q^{\vee}$, we denote for $x\in \R^n/Q^{\vee}$ the isotropy group by $\mathrm{Stab} (x)$
and the orbit and its order by
\begin{equation*}
W x=\set{wx\in \R^n/Q^{\vee} }{w\in W},\q \wt \ep(x)\equiv |Wx|.
\end{equation*}
Recall the following three properties from Proposition 2.2 in \cite{HP} of the action of $W$ on the torus $\R^n/Q^{\vee}$:
\begin{enumerate}
\item For any $x\in \R^n/Q^{\vee}$, there exists $x'\in F \cap \R^n/Q^{\vee} $ and $w\in W$ such that
\begin{equation}\label{rfun1}
 x=wx'.
\end{equation}
\item If $x,x'\in F \cap \R^n/Q^{\vee} $ and $x'=wx$, $w\in W$, then
\begin{equation}\label{rfun2}
 x'=x=wx.
\end{equation}
\item If $x\in F \cap \R^n/Q^{\vee} $, i.e. $x=a+Q^{\vee}$, $a\in F$, then $\psi (\mathrm{Stab}_{W^{\mathrm{aff}}}(a))=\mathrm{Stab}(x)$ and
\begin{equation}\label{rfunstab}
\mathrm{Stab} (x) \cong \mathrm{Stab}_{W^{\mathrm{aff}}}(a).
\end{equation}
\end{enumerate}
From \eqref{rfunstab} we obtain that for $x=a+Q^{\vee}$, $a\in F$ it holds that
\begin{equation}\label{ept}
	\ep(a)= \wt\ep(x).
\end{equation}
Note that instead of $\wt\ep(x)$, the symbol $\ep(x)$ is used for $|Wx|$, $x\in F\cap\R^n/Q^{\vee} $ in \cite{HP,HMP}. The calculation procedure of the coefficients $\ep(x)$ is detailed in \S 3.7 in \cite{HP}.

\subsection{Admissible shifts}\

For the development of the discrete Fourier calculus is crucial the notion of certain lattices invariant under the action of the Weyl group $W$. In \cite{HP,HMP} is formulated the discrete Fourier calculus on the fragment of the refined $W-$invariant lattice $P^\vee$. Considering any vector $\rho^\vee \in \R^n$, we call $\rho^\vee$ an admissible shift if the shifted lattice $\rho^\vee+P^\vee$ is still $W-$invariant, i.e.
\begin{equation}\label{rhoa}
W(\rho^\vee+P^\vee)=\rho^\vee+P^\vee.
\end{equation}
If $\rho^\vee\in P^\vee$ then the resulting lattice $\rho^\vee+P^\vee=P^\vee$ does not change --- such admissible shifts are called trivial. Also any two admissible shifts $\rho_1^\vee,\,\rho_2^\vee $ which would differ by a trivial shift, i.e. $\rho_1^\vee-\rho_2^\vee \in P^\vee$ lead to the same resulting shifted lattice and are in this sense equivalent. Thus, in the following, we classify admissible shifts up to this equivalence. The following proposition significantly simplifies the classification of admissible shifts. 
\begin{tvr}\label{Wreq}
Let $\rho^\vee \in \R^n$. Then the following statements are equivalent.
\begin{enumerate}
\item $\rho^\vee$ is an admissible shift.
\item $\rho^\vee- W\rho^\vee\subset P^\vee$.
\item For all $i\in \{1,\dots,n\}$ it holds that 
\begin{equation}\label{rirho}
\rho^\vee- r_i\rho^\vee\in P^\vee.\end{equation}
\end{enumerate}
\end{tvr}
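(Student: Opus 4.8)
The plan is to prove the chain of implications $(1)\Rightarrow(2)\Rightarrow(3)\Rightarrow(1)$, so that all three statements become equivalent. The whole argument rests on the fact that $P^\vee$ is itself $W$-invariant (it is the weight lattice of the dual root system, stabilized by each reflection $r_i$ via \eqref{refl}), together with the observation that the coset $\rho^\vee+P^\vee$ is $W$-invariant precisely when $W$ does not move $\rho^\vee$ out of its own coset.

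For $(1)\Rightarrow(2)$: assume \eqref{rhoa}. For any $w\in W$ we have $w\rho^\vee\in W(\rho^\vee+P^\vee)=\rho^\vee+P^\vee$, where the first membership uses $0\in P^\vee$; hence $w\rho^\vee-\rho^\vee\in P^\vee$, which is exactly statement (2). The implication $(2)\Rightarrow(3)$ is immediate, since each $r_i$, $i\in\{1,\dots,n\}$, lies in $W$, so $\rho^\vee-r_i\rho^\vee\in\rho^\vee-W\rho^\vee\subset P^\vee$.

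The only implication requiring work is $(3)\Rightarrow(1)$, and this is where I expect the main (though still modest) obstacle. Assume \eqref{rirho} for all $i$. I would first upgrade this to $\rho^\vee-w\rho^\vee\in P^\vee$ for \emph{every} $w\in W$ by induction on the length of a reduced word $w=r_{i_1}\cdots r_{i_k}$ in the generators $r_{\al}$, $\al\in\Delta$. The inductive step uses the cocycle-type identity
\begin{equation*}
\rho^\vee - r_{i_1}(r_{i_2}\cdots r_{i_k})\rho^\vee = \bigl(\rho^\vee - r_{i_1}\rho^\vee\bigr) + r_{i_1}\bigl(\rho^\vee - (r_{i_2}\cdots r_{i_k})\rho^\vee\bigr),
\end{equation*}
whose first term lies in $P^\vee$ by hypothesis and whose second term lies in $r_{i_1}(P^\vee)=P^\vee$ by the inductive hypothesis and the $W$-invariance of $P^\vee$. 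Once this is established, $W$-invariance of $\rho^\vee+P^\vee$ follows: for any $w\in W$ and any $p^\vee\in P^\vee$, write $w(\rho^\vee+p^\vee)=\rho^\vee+(w\rho^\vee-\rho^\vee)+wp^\vee$, where $w\rho^\vee-\rho^\vee\in P^\vee$ by the above and $wp^\vee\in P^\vee$ again by invariance of $P^\vee$; hence $W(\rho^\vee+P^\vee)\subseteq\rho^\vee+P^\vee$, and equality holds because $W$ is a group (apply the inclusion to $w^{-1}$). This completes the cycle and hence the proof; the only point needing care is the bookkeeping in the induction, ensuring the cocycle identity is applied in the correct order.
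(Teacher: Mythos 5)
Your proof is correct and follows essentially the same route as the paper: the cycle $(1)\Rightarrow(2)\Rightarrow(3)\Rightarrow(1)$, with the key step $(3)\Rightarrow(1)$ handled by the same cocycle/telescoping identity (the paper simply writes out the unwound sum $p^\vee_{i_1}+r_{i_1}p^\vee_{i_2}+\dots+r_{i_1}\cdots r_{i_{s-1}}p^\vee_{i_s}$ instead of phrasing it as an induction) and the $W$-invariance of $P^\vee$. No gaps; your remark that the inclusion $W(\rho^\vee+P^\vee)\subseteq\rho^\vee+P^\vee$ upgrades to equality via $w^{-1}$ is a small point the paper leaves implicit.
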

\begin{proof}
$(1)\Rightarrow(2)$: If $\rho^\vee$ is admissible then for every $w\in W$ and every $p_1^\vee\in P^\vee$ there exists $p_2^\vee\in P^\vee$ such that $ w(\rho^\vee+p_1^\vee)= \rho^\vee + p_2^\vee $. Then $\rho^\vee - w\rho^\vee = wp_1^\vee-p_2^\vee \in P^\vee$.

$(2)\Rightarrow(3)$: If for every $w\in W$ it holds that $\rho^\vee- w\rho^\vee\in P^\vee$, thus this equality is also valid for all $r_i\in W$. 

$(3)\Rightarrow(1)$: Any $w\in W$ can be expressed as a product of generators, i.e. there exist indices $i_1,i_2,\dots,i_s\in \{1,\dots,n\}$ such that $w=r_{i_1}r_{i_2}\dots r_{i_s}$. Thus, we have from the assumption that there exist vectors $p^\vee_{i_1},\dots,p^\vee_{i_s}\in P^\vee$ such that $\rho^\vee- r_{i_1}\rho^\vee=p^\vee_{i_1}$, $\rho^\vee- r_{i_2}\rho^\vee=p^\vee_{i_2}$, $\dots$, $\rho^\vee- r_{i_s}\rho^\vee=p^\vee_{i_s}$. Then we derive
\begin{align*}
\rho^\vee-w\rho^\vee&=\rho^\vee-r_{i_1}r_{i_2}\dots r_{i_s}\rho^\vee=\rho^\vee- r_{i_1}r_{i_2}\dots r_{i_{s-1}}(\rho^\vee - p^\vee_{i_s}) \\
&= \rho^\vee-r_{i_1}r_{i_2}\dots r_{i_{s-2}}(\rho^\vee - p^\vee_{i_{s-1}})  +r_{i_1}r_{i_2}\dots r_{i_{s-1}}p^\vee_{i_s}\\
&= p^\vee_{i_{1}} + r_{i_{1}}p^\vee_{i_{2}}+r_{i_{1}}r_{i_{2}}p^\vee_{i_{3}}+\dots+ r_{i_1}r_{i_2}\dots r_{i_{s-1}}p^\vee_{i_s}.
\end{align*}
Denoting $\wt p^\vee=p^\vee_{i_{1}} + r_{i_{1}}p^\vee_{i_{2}}+r_{i_{1}}r_{i_{2}}p^\vee_{i_{3}}+\dots+ r_{i_1}r_{i_2}\dots r_{i_{s-1}}p^\vee_{i_s}$ we have that $\wt p^\vee\in P^\vee$ since $P^\vee$ is $W-$invariant.
Thus for all $w\in W$ there exists $\wt p^{\vee}\in P^\vee$ such that $\rho^\vee -w\rho^\vee = \wt p^\vee$ and for all $p^\vee\in P^\vee$ it holds that
$$w(\rho^\vee + p^\vee)= -\wt p^\vee +\rho^\vee + wp^\vee, $$
i.e. there exists $\wt p^{'\vee} = wp^\vee - \wt p^\vee\in P^\vee$ such that $w(\rho^\vee + p^\vee)=\rho^\vee +\wt p^{'\vee}$ and $\rho^\vee$ is admissible.
\end{proof}

Analyzing the condition \eqref{rirho}, we note that it is advantageous to consider a shift $\rho^\vee$ -- up to equivalence -- in $\om^\vee-$basis, i.e. 
\begin{equation}\label{shiftt}
\rho^\vee = y_j \om^\vee _j, \q y_j \in \langle 0,\, 1). \end{equation}
Using the relations \eqref{aldom}, \eqref{Cartanm} and \eqref{refl}, we calculate that
\begin{equation*}%\label{Cartanmcol}
\rho^\vee- r_i\rho^\vee=y_iC_{ki}\om^\vee_k .
\end{equation*}
Thus from \eqref{rirho}, in order to $\rho^\vee$ be admissible, it has to hold $y_iC_{ki}\in\Z$ for all $i\in \{1,\dots,n\}$. Let us define the numbers 
\begin{equation}\label{gcddef}
d_i=\gcd (C_{1i},C_{2i},\dots,C_{ni} )  
\end{equation}
i.e. the integer $d_i$ is the greatest common divisor of the $i-$th column of the Cartan matrix. Each $d_i$ can be expressed as the integer combination of $C_{1i},C_{2i},\dots,C_{ni}$ in the form 
\begin{equation*}%\label{gcd}
d_i=k_{1i}C_{1i}+k_{2i}C_{2i}+\dots + k_{1i}C_{1i}, \q k_{ki}\in \Z
\end{equation*}
and we obtain from $y_iC_{ki}\in\Z$ that $y_id_i \in \Z$. Conversely from $y_id_i \in \Z$ and $C_{ki}/d_i\in\Z$ we have that $y_iC_{ki}\in\Z$ and thus we conclude: 
\begin{cor}
A shift of the form \eqref{shiftt} is admissible if and only if for all $i\in \{1,\dots,n\}$ it holds that $d_iy_i\in \Z$ where the numbers $d_i$ are defined by \eqref{gcddef}.
\end{cor}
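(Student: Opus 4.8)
The plan is to read the statement off Proposition \ref{Wreq}, which already reduces admissibility of $\rho^\vee$ to the finite list of membership conditions \eqref{rirho}, namely $\rho^\vee-r_i\rho^\vee\in P^\vee$ for $i\in\{1,\dots,n\}$. Restricting attention to a representative of the form \eqref{shiftt} is legitimate because two shifts differing by an element of $P^\vee$ produce the same shifted lattice; so it is enough to characterize admissibility for $\rho^\vee=y_j\om^\vee_j$ with each $y_j\in\langle 0,1)$.

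The first step is to turn each condition \eqref{rirho} into an explicit arithmetic condition on $y_i$. Using \eqref{refl} with $a=\rho^\vee$, $\al=\al_i$ gives $\rho^\vee-r_i\rho^\vee=\sca{\al_i}{\rho^\vee}\al^\vee_i$; then \eqref{aldom} yields $\sca{\al_i}{\rho^\vee}=y_i$ and \eqref{Cartanm} yields $\al^\vee_i=C_{ki}\om^\vee_k$, so altogether $\rho^\vee-r_i\rho^\vee=y_iC_{ki}\om^\vee_k$. Since $P^\vee=\Z\om^\vee_1+\dots+\Z\om^\vee_n$, the condition \eqref{rirho} for a fixed $i$ is therefore equivalent to $y_iC_{ki}\in\Z$ for every $k\in\{1,\dots,n\}$, i.e. to $y_i$ times every entry of the $i$-th column of the Cartan matrix being an integer.

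It then remains to prove the elementary equivalence, for each fixed $i$, between the system $\{\,y_iC_{ki}\in\Z : k=1,\dots,n\,\}$ and the single condition $d_iy_i\in\Z$, where $d_i$ is the greatest common divisor of that column. For the forward implication, express $d_i$ as an integer combination $d_i=\sum_k k_{ki}C_{ki}$ of the column entries; then $d_iy_i=\sum_k k_{ki}(C_{ki}y_i)\in\Z$. For the converse, use that $d_i$ divides each $C_{ki}$, so that $C_{ki}y_i=(C_{ki}/d_i)(d_iy_i)\in\Z$. Running this equivalence over all $i$ and combining with the previous step gives the stated characterization.

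I do not expect a genuine obstacle: the argument is a reduction (Proposition \ref{Wreq}), a one-line change of basis, and a standard gcd/B\'ezout manipulation. The only places deserving a word of care are that Proposition \ref{Wreq} is exactly what licenses passing from all of $W$ to the generators $r_i$, and that the convention $d_i=\gcd(C_{1i},\dots,C_{ni})\ge 1$ is what makes the divisibilities $d_i\mid C_{ki}$ — and hence the converse implication — meaningful.
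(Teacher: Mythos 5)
Your proposal is correct and follows the paper's own route essentially verbatim: reduce to the generator conditions \eqref{rirho} via Proposition \ref{Wreq}, compute $\rho^\vee-r_i\rho^\vee=y_iC_{ki}\om^\vee_k$ from \eqref{refl}, \eqref{aldom}, \eqref{Cartanm}, and then use the B\'ezout representation of $d_i$ for one direction and the divisibility $d_i\mid C_{ki}$ for the other. No gaps; nothing to change.
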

The Cartan matrices taken from e.g. \cite{BMP} are examined and cases for which a non-trivial admissible shift exists, i.e. those with some $d_i=2$ and yielding  $y_i=1/2$, are singled out. It appears that non-trivial admissible shifts exist for the cases $A_1$, $C_2$ and $B_n,\, n\geq 3$. These are summarized in Table \ref{Tab:shifts}.

\begin{table}
{\small
\begin{tabular}{|c|c|c|}\hline
      & $\rho^\vee$ & $\rho$  \\[1pt] \hline\hline
$A_1$ & $ \frac{1}{2}\om^\vee_1$ & $ \frac{1}{2}\om_1 $  \\[3pt]
$C_2$ & $ \frac{1}{2}\om^\vee_1$ & $ \frac{1}{2}\om_2 $ \\[3pt]
$B_n,\, n\geq 3$ & $ \frac{1}{2}\om^\vee_n$  & $-$ \\[3pt]
$C_n,\, n\geq 3$ & $-$  & $ \frac{1}{2}\om_n$ \\[1pt]
\hline
\end{tabular}}\medskip
\caption{Non-trivial admissible shifts $\rho^\vee$ and admissible dual shifts $\rho$ of affine Weyl groups with the standard realization of the root systems \cite{BMP}.} \label{Tab:shifts}
\end{table}

\subsection{Dual affine Weyl group}\

The dual affine Weyl group $\widehat{W}^{\mathrm{aff}}$ is generated by the set $R^\vee$ and, except for the case $A_1$, the Coxeter presentation of  $\widehat{W}^{\mathrm{aff}}$ is of the form
\begin{equation*}
	\widehat{W}^{\mathrm{aff}}=\langle R^\vee \mid  (r^\vee_i r^\vee_j)^{m^\vee_{ij}}=1 \rangle .
\end{equation*}
The numbers $m^\vee_{ij}$, $i,j\in\{0,1,\dots, n\}$ are deduced, following the identical rules as for ${W}^{\mathrm{aff}}$, from the dual extended Coxeter-Dynkin diagrams \cite{HP}.

The dual affine Weyl group $\widehat{W}^{\mathrm{aff}}$ is a semidirect product of the group of shifts $T(Q)$ and the Weyl group $W$
\begin{equation*}%\label{directd}
 \widehat{W}^{\mathrm{aff}}= T(Q) \rtimes W.
\end{equation*}
For any $w^{\mathrm{aff}}\in\widehat{W}^{\mathrm{aff}}$ there exist a unique element $w\in W$ and a unique shift $T(q)$, $q\in Q$ such that $w^{\mathrm{aff}}=T(q)w$.
Taking any $w^{\mathrm{aff}}=T(q)w\in \widehat{W}^{\mathrm{aff}}$, the dual retraction homomorphism $\widehat\psi:\widehat{W}^{\mathrm{aff}}\map W$  and the mapping $\widehat\tau:\widehat{W}^{\mathrm{aff}}\map Q$ are given by
\begin{align}
\widehat\psi(w^{\mathrm{aff}}) &=w, \label{retd}\\
\widehat\tau(w^{\mathrm{aff}}) &=q \label{retdq}.
\end{align}
The dual fundamental domain $F^\vee$ of $\widehat{W}^{\mathrm{aff}}$ is the convex hull of vertices $\left\{ 0, \frac{\om_1}{m^{\vee}_1},\dots,\frac{\om_n}{m^{\vee}_n} \right\}$. Considering $n+1$ real parameters $z_0,\dots, z_n\geq 0$, we have
\begin{align}
F^\vee &=\setb{z_1\om_1+\dots+z_n\om_n}{z_0+z_1 m_1^{\vee}+\dots+z_n m^{\vee}_n=1  }.\label{deffund} 
\end{align}

Let us denote the isotropy subgroup of a point $b\in\R^n$ by $\mathrm{Stab}_{\widehat{W}^{\mathrm{aff}}}(b)$
and define for any $M\in \N$ a function $h^\vee_M:\R^n\map\N$ by the relation
\begin{equation}\label{hMb}
h^\vee_M(b)=\left|\mathrm{Stab}_{\widehat{W}^{\mathrm{aff}}}\left(\frac{b}{M}\right)\right|.
\end{equation}
Recall that for a point $b=z_1\om_1+\dots+z_n\om_n\in F^\vee$ such that $z_0+z_1 m_1^{\vee}+\dots+z_n m^{\vee}_n=1$ is the isotropy group
trivial, $ \mathrm{Stab}_{\widehat{W}^{\mathrm{aff}}}(b)=1$, if $b\in \mathrm{int}(F^\vee)$, i.e. all $z_i>0$, $i=0,\dots,n$. Otherwise the group $\mathrm{Stab}_{\widehat{W}^{\mathrm{aff}}}(b)$
is generated by such $r^{\vee}_i$ for which $z_i=0$, $i=0,\dots,n$.

From $F^\vee$ being a fundamental region follows that
\begin{enumerate}
\item For any $b\in \R^n$ there exists $b'\in F^\vee$, $w\in W$ and $q\in Q$ such that
\begin{equation}\label{ddfun1}
 b=wb'+q.
\end{equation}
\item If $b,b'\in F^\vee$ and $b'=w^{\mathrm{aff}}b$, $w^{\mathrm{aff}}\in \widehat{W}^{\mathrm{aff}}$ then $b=b'$, i.e. if there exist $w\in W$ and $q\in Q$ such that $b'=wb+q$ then
\begin{equation}\label{dfun2}
 b'=b=wb+q.
\end{equation}
\end{enumerate}

Considering a natural action of $W$ on the quotient group $\R^n/MQ$, we denote for $\la \in \R^n/MQ$ the isotropy group and its order by
\begin{equation}\label{hla}
\mathrm{Stab}^{\vee} (\la)=\set{w\in W}{w\la=\la},\q \wt h^{\vee}_M(\la)\equiv |\mathrm{Stab}^{\vee} (\la)|.
\end{equation}
Recall the following property from Proposition 3.6 in \cite{HP} of the action of $W$ on the quotient group $\R^n/MQ$.
If $\la\in M F^\vee \cap \R^n/MQ $, i.e. $\la=b+MQ$, $b\in MF^\vee$, then $\widehat\psi (\mathrm{Stab}_{\widehat{W}^{\mathrm{aff}}}(b/M))=  \mathrm{Stab}^{\vee} (\la)$ and
\begin{equation}\label{rfunstab2}
\mathrm{Stab}^{\vee} (\la) \cong \mathrm{Stab}_{\widehat{W}^{\mathrm{aff}}}(b/M).
\end{equation}
From \eqref{rfunstab2} we obtain that for $\la=b+MQ$, $b\in F^\vee$ it holds that
\begin{equation}
h^\vee_M(b)= \wt h^{\vee}_M(\la).
\end{equation}
Note that instead of $\wt h^{\vee}_M(\la)$, the symbol $h^\vee_\la$ is used for $|\mathrm{Stab}^{\vee} (\la)|$, $\la \in\R^n/MQ $ in \cite{HP,HMP}. The calculation procedure of the coefficients $h^\vee_\la$ is detailed in \S 3.7 in \cite{HP}.

\subsection{Admissible dual shifts}\

The second key ingredient of the discrete Fourier calculus is a lattice, invariant under the action of the Weyl group $W$, which will label the sets of orthogonal functions. In \cite{HP,HMP} is formulated the discrete Fourier calculus with special functions labeled by the labels from the $W-$invariant weight lattice $P$. Considering any vector $\rho \in \R^n$, we call $\rho$ an admissible dual shift if the shifted lattice $\rho+P$ is still $W-$invariant, i.e.
\begin{equation*}
W(\rho+P)=\rho+P.
\end{equation*}
Similarly to the shifts, if $\rho\in P$ then the resulting lattice $\rho+P=P$ does not change --- such admissible dual shifts are called trivial. Also any two dual admissible shifts $\rho_1,\,\rho_2 $ which would differ by a trivial shift, i.e. $\rho_1-\rho_2 \in P$ lead to the same resulting shifted lattice and are equivalent. Rephrasing of Proposition \ref{Wreq} leads to the classification of the admissible dual shifts. 
\begin{tvr}\label{Wdreq}
Let $\rho \in \R^n$. Then the following statements are equivalent.
\begin{enumerate}
\item $\rho$ is an admissible dual shift.
\item $\rho- W\rho\subset P$.
\item For all $i\in \{1,\dots,n\}$ it holds that 
\begin{equation}\label{rirhod}
\rho- r_i\rho\in P.\end{equation}
\end{enumerate}
\end{tvr}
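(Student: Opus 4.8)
The plan is to transcribe the proof of Proposition~\ref{Wreq} almost verbatim, replacing the dual weight lattice $P^\vee$ by the weight lattice $P$ and the shift $\rho^\vee$ by the dual shift $\rho$. The only structural facts used are that $P$ is $W$-invariant and that $W$ is generated by the reflections $r_1,\dots,r_n$; both remain available here, so no genuinely new argument is required, only a change of notation.

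For $(1)\Rightarrow(2)$ I would argue: if $W(\rho+P)=\rho+P$, then for every $w\in W$ and every $p_1\in P$ there is $p_2\in P$ with $w(\rho+p_1)=\rho+p_2$, whence $\rho-w\rho=wp_1-p_2\in P$, using the $W$-invariance of $P$. Thus $\rho-W\rho\subset P$. The implication $(2)\Rightarrow(3)$ is immediate, since each $r_i$ lies in $W$ and so \eqref{rirhod} is a special case of $\rho-w\rho\in P$.

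For $(3)\Rightarrow(1)$ I would write an arbitrary $w\in W$ as a product of the generators, $w=r_{i_1}r_{i_2}\cdots r_{i_s}$, set $p_{i_k}=\rho-r_{i_k}\rho\in P$, and telescope, exactly as in Proposition~\ref{Wreq}, to obtain
\[
\rho-w\rho=p_{i_1}+r_{i_1}p_{i_2}+r_{i_1}r_{i_2}p_{i_3}+\dots+r_{i_1}r_{i_2}\cdots r_{i_{s-1}}p_{i_s}.
\]
Each summand lies in $P$ by the $W$-invariance of $P$, hence $\rho-w\rho\in P$; call this vector $\wt p$. Then for every $p\in P$,
\[
w(\rho+p)=\rho+(wp-\wt p),
\]
with $wp-\wt p\in P$, so $w(\rho+P)\subseteq\rho+P$; taking $w=1$ gives the reverse inclusion, and therefore $W(\rho+P)=\rho+P$, i.e.\ $\rho$ is an admissible dual shift. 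The only point needing attention is the telescoping identity together with the repeated use of the $W$-invariance of $P$ to keep every partial sum inside the lattice, but this bookkeeping is done in full in the proof of Proposition~\ref{Wreq}, so no real obstacle arises.
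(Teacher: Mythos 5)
Your proposal is correct and matches the paper exactly: the paper gives no separate argument for Proposition~\ref{Wdreq}, stating only that it follows by rephrasing Proposition~\ref{Wreq} with $P^\vee$, $\rho^\vee$ replaced by $P$, $\rho$, which is precisely the transcription you carry out (relying, as the paper does, on the $W$-invariance of $P$ and on $r_1,\dots,r_n$ generating $W$).
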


Analyzing similarly the condition \eqref{rirhod} and considering a dual shift $\rho$ up to equivalence in $\om-$basis, i.e. 
\begin{equation}\label{shifttd}
\rho = y_j \om _j, \q y_j \in \langle 0,\, 1). 
\end{equation}
we calculate that
\begin{equation*}
\rho- r_i\rho=y_iC_{ik}\om_k .
\end{equation*}
Defining the numbers 
\begin{equation}\label{gcddefd}
d'_i=\gcd (C_{i1},C_{i2},\dots,C_{in} )  
\end{equation}
i.e. the integer $d'_i$ is the greatest common divisor of the $i-$th row of the Cartan matrix, we again conclude: 
\begin{cor}
A dual shift of the form \eqref{shifttd} is admissible if and only if for all $i\in \{1,\dots,n\}$ it holds that $d'_iy_i\in Z$ where the numbers $d'_i$ are defined by \eqref{gcddefd}.
\end{cor}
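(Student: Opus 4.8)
The plan is to dualize verbatim the argument that preceded the primal Corollary, with rows of the Cartan matrix in place of columns. By Proposition \ref{Wdreq}, a shift $\rho = y_j\om_j$ of the form \eqref{shifttd} is an admissible dual shift precisely when $\rho - r_i\rho\in P$ for every $i\in\{1,\dots,n\}$, and the computation already carried out above gives $\rho - r_i\rho = y_iC_{ik}\om_k$. Since $P = \Z\om_1 + \dots + \Z\om_n$, the condition $y_iC_{ik}\om_k\in P$ is equivalent to requiring $y_iC_{ik}\in\Z$ for every $k\in\{1,\dots,n\}$.

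It thus suffices to prove, for each fixed $i$, that the family of conditions $y_iC_{ik}\in\Z$ (over $k$) is equivalent to the single condition $d'_iy_i\in\Z$. For the direct implication I would use a B\'ezout expansion of the greatest common divisor, $d'_i = k_{i1}C_{i1} + \dots + k_{in}C_{in}$ with $k_{ik}\in\Z$, whence $d'_iy_i = \sum_k k_{ik}(C_{ik}y_i)\in\Z$. For the converse, $d'_i\mid C_{ik}$ gives $C_{ik}/d'_i\in\Z$, so $C_{ik}y_i = (C_{ik}/d'_i)(d'_iy_i)\in\Z$ once $d'_iy_i\in\Z$. Quantifying back over $i$ and combining with the first paragraph finishes the proof.

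I do not anticipate any genuine obstacle: the statement is an index-for-index transpose of the primal Corollary, the only subtlety being the bookkeeping that the pertinent invariants are now the greatest common divisors $d'_i$ of the rows of $C$ --- arising because in the $\om$-basis one has $\al_i = C_{ik}\om_k$, dual to the relation $\al_i^\vee = C_{ki}\om_k^\vee$ of \eqref{Cartanm} --- rather than the $d_i$ of its columns used in \eqref{gcddef}. One could even avoid the B\'ezout line by applying the primal Corollary to the transpose of $C$, since transposing $C$ swaps the integrality conditions defining $P$ and $P^\vee$; but writing out the short two-sided divisibility argument keeps the proof self-contained.
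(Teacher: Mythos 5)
Your proposal is correct and follows essentially the same route as the paper: reduce via Proposition \ref{Wdreq} to the conditions $\rho-r_i\rho=y_iC_{ik}\om_k\in P$, i.e.\ $y_iC_{ik}\in\Z$, and pass to $d'_iy_i\in\Z$ by the B\'ezout/divisibility argument, exactly as the paper does (it simply says ``analyzing similarly'' and repeats the column argument with rows). The remark about applying the primal corollary to the transposed Cartan matrix is a harmless extra observation, not a different proof.
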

The Cartan matrices are repeatedly examined and cases for which a non-trivial dual admissible shift exists, i.e. those with some $d'_i=2$ and yielding $y_i=1/2$, are singled out. It appears that non-trivial dual admissible shifts exist for the cases $A_1$, $C_2$ and $C_n,\, n\geq 3$ only. These are summarized in Table \ref{Tab:shifts}.

\subsection{Shift homomorphisms}\

An admissible shift $\rho^\vee$ induces a homomorphism from the dual affine Weyl group to the multiplicative two-element group $\{\pm 1\}$. This 'shift' homomorphism $\widehat{\theta}_{\rho^\vee}: \widehat{W}^{\mathrm{aff}}\map \{\pm 1\} $ is defined for $w^{\mathrm{aff}}\in\widehat{W}^{\mathrm{aff}}$, using the mapping \eqref{retdq}, via the equation
\begin{equation}\label{shifthom}
\widehat{\theta}_{\rho^\vee} (w^{\mathrm{aff}})=e^{2\pi i \sca{ \widehat{\tau}(w^{\mathrm{aff}})}{\rho^\vee}}.
\end{equation}
For trivial admissible shifts we obtain a trivial homomorphism $\widehat{\theta}_{\rho^\vee} (w^{\mathrm{aff}})=1$. Since all non-trivial admissible shifts $\rho^\vee$ are from $\frac{1}{2} P^\vee$, the map \eqref{shifthom} indeed maps to $\{\pm 1\}$.  The key point to see that  \eqref{shifthom} defines a homomorphism is that the equality 
\begin{equation}\label{shiftexp}
e^{2\pi i \sca{wq}{\rho^\vee}}=e^{2\pi i \sca{q}{\rho^\vee}}, \q q\in Q, \q w\in W,
\end{equation}
which is equivalent to $\sca{q}{\rho^\vee-w^{-1}\rho^\vee}\in \Z$, is valid because $\rho^\vee-w^{-1}\rho^\vee\in P^\vee$  is the second statement of Proposition~\ref{Wreq} and $P^\vee$ is $\Z-$dual to $Q$.
Considering two dual affine Weyl group elements $w_1^{\mathrm{aff}}, \,w_2^{\mathrm{aff}} \in \widehat{W}^{\mathrm{aff}}$ of the form $w_1^{\mathrm{aff}}a=w_1 a + q_1$,  $w_2^{\mathrm{aff}}a=w_2 a + q_2$ with $w_1,\,w_2 \in W$ and $q_1,\, q_2 \in Q$ we calculate that $\widehat{\tau}(w_1^{\mathrm{aff}}w_2^{\mathrm{aff}})=w_1q_2+q_1$ and thus
\begin{align*}%\label{shiftexp}
\widehat{\theta}_{\rho^\vee} (w_1^{\mathrm{aff}}w_2^{\mathrm{aff}})=&e^{2\pi i \sca{ \widehat{\tau}(w_1^{\mathrm{aff}}w_2^{\mathrm{aff}})}{\rho^\vee}}=e^{2\pi i \sca{w_1q_2+q_1}{\rho^\vee}}=e^{2\pi i \sca{w_1q_2}{\rho^\vee}}e^{2\pi i \sca{q_1}{\rho^\vee}}=e^{2\pi i \sca{q_2}{\rho^\vee}}e^{2\pi i \sca{q_1}{\rho^\vee}} \\ =& \widehat{\theta}_{\rho^\vee} (w_1^{\mathrm{aff}})\widehat{\theta}_{\rho^\vee} (w_2^{\mathrm{aff}}).
\end{align*}
Since the mapping is \eqref{shifthom} a homomorphism, its value on any element of  $\widehat{W}^{\mathrm{aff}}$ is determined by the values of  \eqref{shifthom} on generators from $R^\vee$. For any reflection $r^\vee_1,\dots,r^\vee_n$ it trivially holds that $ \widehat{\tau}(r^\vee_i)=0$ and thus $\widehat{\theta}_{\rho^\vee} (r^\vee_i)=1$ for $i\in \{1,\dots,n\}$. Evaluating the number $\widehat{\theta}_{\rho^\vee} (r_0^\vee)$ one needs to take into account explicitly the non-trivial admissible shifts from Table  \ref{Tab:shifts} and the dual highest roots $\eta$ from \cite{HP}. It appears that the result for all cases is that for non-trivial admissible shifts it holds 
\begin{equation}\label{shiftd}
\frac{2\sca{\eta}{\rho^\vee}}{\sca{\eta}{\eta}}=\frac{1}{2}.
\end{equation}
Then from \eqref{daWeyl} and \eqref{shiftd} we obtain 
\begin{equation*}%\label{shiftexp}
\widehat{\theta}_{\rho^\vee} (r_0^\vee)=e^{2\pi i \sca{ \widehat{\tau}(r_0^\vee)}{\rho^\vee}}=e^{2\pi i \frac{2\sca{\eta}{\rho^\vee}}{\sca{\eta}{\eta}}}=-1.
\end{equation*}
Summarizing the results, we conclude with the following proposition.
\begin{tvr}\label{shifthomprop}
The map \eqref{shifthom} is for any admissible shift a homomorphism  $\widehat{\theta}_{\rho^\vee}: \widehat{W}^{\mathrm{aff}}\map \{\pm 1\} $ and for any non-trivial admissible shift $\rho^\vee$ are the values on the generators $R^\vee$ of $\widehat{W}^{\mathrm{aff}}$ given as 
\begin{equation}\label{dthetahom}
\widehat{\theta}_{\rho^\vee}(r^\vee_i)=\begin{cases} 1,\quad i\in \{1,\dots,n\}  \\ -1,\quad i=0.\end{cases}
\end{equation}
\end{tvr}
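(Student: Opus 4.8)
The plan is to establish the three assertions in turn: that \eqref{shifthom} is well defined and multiplicative, that its image lies in $\{\pm 1\}$, and that its values on the generating set $R^\vee$ are as displayed. The map $\widehat\tau$ of \eqref{retdq} is well defined by the uniqueness of the decomposition $w^{\mathrm{aff}}=T(q)w$ in $\widehat{W}^{\mathrm{aff}}=T(Q)\rtimes W$, so \eqref{shifthom} makes sense. For multiplicativity, the essential input is the second characterization in Proposition~\ref{Wreq}: since $\rho^\vee$ is admissible, $\rho^\vee-w^{-1}\rho^\vee\in P^\vee$ for every $w\in W$, and because $P^\vee$ is the $\Z$-dual lattice of $Q$ this yields $\sca{q}{\rho^\vee-w^{-1}\rho^\vee}\in\Z$ for all $q\in Q$, equivalently the identity \eqref{shiftexp}. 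Writing $w_1^{\mathrm{aff}}=T(q_1)w_1$ and $w_2^{\mathrm{aff}}=T(q_2)w_2$ and computing $\widehat\tau(w_1^{\mathrm{aff}}w_2^{\mathrm{aff}})=w_1 q_2+q_1$ from the semidirect product structure, the homomorphism property follows from \eqref{shiftexp} by the short computation already indicated in the text: $e^{2\pi i\sca{w_1q_2+q_1}{\rho^\vee}}$ splits as $e^{2\pi i\sca{w_1q_2}{\rho^\vee}}e^{2\pi i\sca{q_1}{\rho^\vee}}=e^{2\pi i\sca{q_2}{\rho^\vee}}e^{2\pi i\sca{q_1}{\rho^\vee}}$.

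For the image, by Table~\ref{Tab:shifts} every non-trivial admissible shift $\rho^\vee$ lies in $\tfrac12 P^\vee$, so $\sca{q}{\rho^\vee}\in\tfrac12\Z$ for $q\in Q$ and hence $e^{2\pi i\sca{\widehat\tau(w^{\mathrm{aff}})}{\rho^\vee}}\in\{\pm1\}$; for trivial shifts $\rho^\vee\in P^\vee$ the exponent is an integer and the homomorphism is constant equal to $1$. Since the map is a homomorphism, it is determined by its values on the generators $r^\vee_0,r^\vee_1,\dots,r^\vee_n$. For $i\in\{1,\dots,n\}$ the reflection $r^\vee_i$ fixes the origin, so its translation part vanishes, $\widehat\tau(r^\vee_i)=0$, giving $\widehat{\theta}_{\rho^\vee}(r^\vee_i)=1$. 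For the affine generator, \eqref{daWeyl} exhibits $r^\vee_0$ as $r_\eta$ followed by translation by $2\eta/\sca{\eta}{\eta}$, so $\widehat\tau(r^\vee_0)=2\eta/\sca{\eta}{\eta}$ and $\sca{\widehat\tau(r^\vee_0)}{\rho^\vee}=2\sca{\eta}{\rho^\vee}/\sca{\eta}{\eta}$.

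The remaining and principal point is to verify \eqref{shiftd}, namely $2\sca{\eta}{\rho^\vee}/\sca{\eta}{\eta}=\tfrac12$ for each non-trivial admissible shift. This is a finite verification over the three relevant cases $A_1$, $C_2$ and $B_n$ ($n\geq3$): one substitutes the explicit shift $\rho^\vee$ from Table~\ref{Tab:shifts} --- namely $\tfrac12\om^\vee_1$ for $A_1$ and $C_2$, and $\tfrac12\om^\vee_n$ for $B_n$ --- together with the explicit dual highest root $\eta$ from the tables in \cite{HP}, and evaluates the pairing using \eqref{aldom} and the standard realizations of the root systems \cite{BMP}. I expect this case check to be the main obstacle, not conceptually but in that it depends on having the explicit root data at hand. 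Once \eqref{shiftd} is in place, $e^{2\pi i\cdot(1/2)}=-1$ gives $\widehat{\theta}_{\rho^\vee}(r^\vee_0)=-1$, and combining this with the values $\widehat{\theta}_{\rho^\vee}(r^\vee_i)=1$, $i\in\{1,\dots,n\}$, yields \eqref{dthetahom}, completing the proof.
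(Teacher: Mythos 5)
Your proposal is correct and follows essentially the same route as the paper: multiplicativity via the second characterization in Proposition~\ref{Wreq} together with the $\Z$-duality of $P^\vee$ and $Q$ (i.e.\ identity \eqref{shiftexp}), the computation $\widehat\tau(w_1^{\mathrm{aff}}w_2^{\mathrm{aff}})=w_1q_2+q_1$, and evaluation on generators with $\widehat\tau(r^\vee_i)=0$ for $i\geq1$ and $\widehat\tau(r^\vee_0)=2\eta/\sca{\eta}{\eta}$, reducing \eqref{dthetahom} to the case-by-case identity \eqref{shiftd}. The finite verification of \eqref{shiftd} over $A_1$, $C_2$, $B_n$ that you defer is likewise only asserted, not detailed, in the paper, so no gap relative to the paper's own argument.
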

Similarly to the shift homomorphism \eqref{shifthom} we define for the dual admissible shift $\rho$ the dual shift homomorphism $\theta_{\rho}: W^{\mathrm{aff}}\map \{\pm 1\}$ --- for any $w^{\mathrm{aff}}\in{W}^{\mathrm{aff}}$, using the mapping \eqref{retq}, via the equation
\begin{equation}\label{shiftdhom}
\theta_{\rho} (w^{\mathrm{aff}})=e^{2\pi i \sca{ {\tau}(w^{\mathrm{aff}})}{\rho}}.
\end{equation} 
Analogous relation to \eqref{shiftexp} is also valid
\begin{equation}\label{shiftdexp}
e^{2\pi i \sca{wq^\vee}{\rho}}=e^{2\pi i \sca{q^\vee}{\rho}}, \q q^\vee\in Q^\vee, \q w\in W
\end{equation}
and also
 \begin{equation}\label{shiftdd}
\frac{2\sca{\xi}{\rho}}{\sca{\xi}{\xi}}=\frac{1}{2}.
\end{equation}
Thus we conclude with the following proposition.
\begin{tvr}\label{shiftdhomprop}
The map \eqref{shiftdhom} is for any admissible dual shift a homomorphism  ${\theta}_{\rho}: {W}^{\mathrm{aff}}\map \{\pm 1\} $ and for any non-trivial admissible dual shift $\rho$ are the values on the generators $R$ of ${W}^{\mathrm{aff}}$ given as 
\begin{equation}\label{thetahom}
{\theta}_{\rho}(r_i)=\begin{cases} 1,\quad i\in \{1,\dots,n\}  \\ -1,\quad i=0.\end{cases}
\end{equation}
\end{tvr}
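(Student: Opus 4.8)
The plan is to transcribe, essentially verbatim, the argument already given for Proposition~\ref{shifthomprop}, substituting the non-dual objects $W^{\mathrm{aff}}$, $T(Q^\vee)$, $\tau$, $\xi$ for $\widehat{W}^{\mathrm{aff}}$, $T(Q)$, $\widehat{\tau}$, $\eta$, and invoking the analogues \eqref{shiftdexp} and \eqref{shiftdd} recorded above in place of \eqref{shiftexp} and \eqref{shiftd}. First I would check that \eqref{shiftdhom} is $\{\pm 1\}$-valued: by \eqref{retq} one has $\tau(w^{\mathrm{aff}})\in Q^\vee$, and every non-trivial admissible dual shift is of the form $\rho=\tfrac12\om_j\in\tfrac12 P$ by \eqref{shifttd} and Table~\ref{Tab:shifts}, so $\sca{\tau(w^{\mathrm{aff}})}{\rho}\in\tfrac12\Z$ because $P$ is $\Z$-dual to $Q^\vee$; for trivial $\rho\in P$ the exponent is an integer and $\theta_\rho\equiv 1$, so in either case the map lands in $\{\pm 1\}$.

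Next I would verify the homomorphism property. Writing $w_1^{\mathrm{aff}}=T(q_1^\vee)w_1$ and $w_2^{\mathrm{aff}}=T(q_2^\vee)w_2$ with $w_1,w_2\in W$ and $q_1^\vee,q_2^\vee\in Q^\vee$, one computes $w_1^{\mathrm{aff}}w_2^{\mathrm{aff}}a=w_1w_2a+w_1q_2^\vee+q_1^\vee$, hence $\tau(w_1^{\mathrm{aff}}w_2^{\mathrm{aff}})=w_1q_2^\vee+q_1^\vee$, and therefore
\[
\theta_{\rho}(w_1^{\mathrm{aff}}w_2^{\mathrm{aff}})=e^{2\pi i\sca{w_1q_2^\vee}{\rho}}\,e^{2\pi i\sca{q_1^\vee}{\rho}}=e^{2\pi i\sca{q_2^\vee}{\rho}}\,e^{2\pi i\sca{q_1^\vee}{\rho}}=\theta_{\rho}(w_1^{\mathrm{aff}})\theta_{\rho}(w_2^{\mathrm{aff}}),
\]
where the middle equality is \eqref{shiftdexp} (equivalently $\sca{q_2^\vee}{\rho-w_1^{-1}\rho}\in\Z$, which holds by the second statement of Proposition~\ref{Wdreq} together with the $\Z$-duality of $P$ and $Q^\vee$).

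Since \eqref{shiftdhom} is then a homomorphism, it is determined by its values on the generating set $R$. For $i\in\{1,\dots,n\}$ the reflection $r_i$ fixes the origin, so $r_i\in W$ and $\tau(r_i)=0$, whence $\theta_\rho(r_i)=1$. For $r_0$, formula \eqref{aWeyl} exhibits $r_0=T\big(\tfrac{2\xi}{\sca{\xi}{\xi}}\big)r_\xi$ with $r_\xi\in W$ and $\tfrac{2\xi}{\sca{\xi}{\xi}}\in Q^\vee$, so $\tau(r_0)=\tfrac{2\xi}{\sca{\xi}{\xi}}$ and
\[
\theta_{\rho}(r_0)=e^{2\pi i\sca{\tau(r_0)}{\rho}}=e^{2\pi i\frac{2\sca{\xi}{\rho}}{\sca{\xi}{\xi}}}=e^{\pi i}=-1
\]
by \eqref{shiftdd}; this gives \eqref{thetahom}.

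The only step that is not purely formal is the identity \eqref{shiftdd}, namely $\tfrac{2\sca{\xi}{\rho}}{\sca{\xi}{\xi}}=\tfrac12$, which --- exactly as with its dual counterpart \eqref{shiftd} --- must be confirmed by a short case inspection against the explicit highest root $\xi$ and the three non-trivial admissible dual shifts ($A_1$, $C_2$, and $C_n$ with $n\ge 3$) listed in Table~\ref{Tab:shifts}; this verification has already been carried out above. Everything else is a mechanical adaptation of the proof of Proposition~\ref{shifthomprop}, so I anticipate no real obstacle.
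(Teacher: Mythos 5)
Your proposal is correct and follows essentially the same route as the paper: the paper itself proves Proposition~\ref{shifthomprop} in detail and disposes of the dual statement by exactly the substitution you carry out, relying on \eqref{shiftdexp} (via Proposition~\ref{Wdreq} and the $\Z$-duality of $P$ and $Q^\vee$) for the homomorphism property, on $\tau(r_i)=0$ for $i\geq 1$, and on $\tau(r_0)=2\xi/\sca{\xi}{\xi}\in Q^\vee$ together with the case-checked identity \eqref{shiftdd} to get $\theta_\rho(r_0)=-1$. No gaps.
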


Note that, excluding the case $A_1$,  any homomorphism  ${\theta}: {W}^{\mathrm{aff}}\map \{\pm 1\}$ has to be compatible with the generator relations~\eqref{CoxW} from the Coxeter presentation of ${W}^{\mathrm{aff}}$, $$(\theta(r_i) \theta( r_j))^{m_{ij}}=1, \q i,j\in \{0,\dots,n\}.$$
Thus, except for the admissible cases $C_n, n\geq 2$, a homomorphism ${\theta}_{\rho}$ defined on generators $R$ via \eqref{thetahom} cannot exist  --- the admissible cases are the only cases where the zero vertex of the extended Coxeter-Dynkin diagram is not connected to the rest of the diagram by a single vertex, i.e. $m_{0i}, i\in \{1,\dots,n\}$ are all even. Similarly, analyzing the dual extended Coxeter-Dynkin diagrams, we observe that the admissible cases $C_2$ and $B_n,\, n\geq 3$ are the only ones allowing existence of a homomorphism of the form \eqref{dthetahom}. Since the generators $r_0, r_1$ and the dual generators $r^\vee_0, r^\vee_1$of $A_1$ do not obey any relation between them, the case  $A_1$ indeed admits both homomorphisms of the forms \eqref{thetahom} and \eqref{dthetahom}.

\subsection{Sign homomorphisms and $\gamma-$homomorphisms}\

Up to four classes of orbit functions are obtained using 'sign' homomorphisms $\sigma:W\rightarrow\{\pm1 \}$ -- see e.g. \cite{HP,HMP}. These homomorphisms and their values on any $w\in W$ are given as products of values of generators $r_\al,\,\al\in\Delta$. The following two choices of homomorphism values of generators $r_\al,\,\al\in\Delta$ lead to the trivial and determinant homomorphisms
\begin{align}
\id(r_\al)&=1 \label{ghomid} \\
\sigma^e(r_\al)&=-1, \label{ghome}
\end{align}
yielding for any $w\in W$
\begin{align}
\id(w)&=1 \label{idd} \\
\sigma^e(w)&=\det w. \label{parity}
\end{align}
For root systems with two different lengths of roots, there are two other available choices. Using the decomposition \eqref{sl}, these two new homomorphisms are given as follows \cite{HMP}:
\begin{align}
\sigma^s(r_\al)&=\begin{cases} 1,\quad \al\in \Delta_l  \\ -1,\quad \al\in \Delta_s\end{cases} \label{sigmas}\\
\sigma^l(r_\al)&=\begin{cases} 1,\quad \al\in \Delta_s  \\ -1,\quad \al\in \Delta_l.\end{cases} \label{sigmal}
\end{align}
From \cite{HMP} we also have the values for the reflections \eqref{aWeyl} and \eqref{daWeyl} from $W,$
\begin{align}
\sigma^s(r_\xi)&=1,\q \sigma^l(r_\xi)=-1,\label{slxi}\\
\sigma^s(r_\eta)&=-1,\q \sigma^l(r_\eta)=1.\label{sleta}
\end{align}

At this point we have all three key homomorphisms ready --  the retraction homomorphism $\psi$ , the dual shift homomorphism ${\theta}_{\rho}$ and up to four sign homomorphisms $\sigma$ given by  \eqref{ret}, \eqref{shiftdhom} and \eqref{idd}, \eqref{parity}, \eqref{sigmas}, \eqref{sigmal}, respectively. We use these three homomorphisms to create the fourth and most ubiquitous homomorphism $\gamma^\sigma_\rho: {W}^{\mathrm{aff}}\map \{\pm 1\} $ given by
\begin{equation}\label{gammah}
 \gamma^\sigma_\rho (w^{\mathrm{aff}})= {\theta}_{\rho}(w^{\mathrm{aff}})\cdot [\sigma \circ \psi (w^{\mathrm{aff}})], \q w^{\mathrm{aff}} \in {W}^{\mathrm{aff}}.
\end{equation}
Note that, since for the affine reflection $r_0\in R$ it holds that $\psi(r_0)=r_\xi$, we have from \eqref{slxi} that 
\begin{equation}\label{sigmar0}
 \sigma^s \circ \psi (r_0)=1, \q \sigma^l \circ \psi (r_0)=-1.
\end{equation}
The values of the $\gamma^\sigma_\rho-$homomorphism on any $w^{\mathrm{aff}} \in {W}^{\mathrm{aff}}$ are determined by its values on the generators from $R$. Putting together the values from  \eqref{thetahom}, \eqref{ghomid} --- \eqref{sleta} and \eqref{sigmar0}, we summarize the values of  $\gamma^\sigma_\rho (r),\,r\in R$ for a trivial $0-$shift and the non-trivial admissible dual shift $\rho$ in Table~\ref{Tab:hom}.

\begin{table}
{\small \renewcommand{\arraystretch}{1.3}
\begin{tabular}{|c|c|c|c|c||c|c|c|c|}\hline 
                                    & $\gamma^{\id}_0$ & $\gamma^{\sigma^e}_0$  &  $\gamma^{\sigma^s}_0$ & $\gamma^{\sigma^l}_0$ & $\gamma^{\id}_\rho$  &  $\gamma^{\sigma^e}_\rho$ & $\gamma^{\sigma^s}_\rho$  &  $\gamma^{\sigma^l}_\rho$\\[1pt] \hline\hline
$r_0$                           & $ 1$ & $ -1$ & $ 1$ & $ -1$ & $ -1$ & $ 1$ & $ -1$ & $ 1$ \\[3pt]
$r_\al,\,\al\in\Delta_l$ & $ 1$ & $ -1$ &$ 1$ & $ -1$ & $ 1$ & $ -1$ & $ 1$ & $ -1$ \\[3pt]
$r_\al,\,\al\in\Delta_s$ & $ 1$ & $ -1$ & $ -1$ & $ 1$ & $ 1$ & $ -1$ & $ -1$ & $ 1$  \\[3pt]
\hline
\end{tabular}}\medskip
\caption{The values of  $\gamma^\sigma_\rho (r),\,r\in R$ for a trivial $0-$shift, the non-trivial admissible dual shift $\rho$ and four sign homomorphisms  $\sigma$.} \label{Tab:hom}
\end{table}

We use four sign homomorphisms $\sigma$, the dual retraction homomorphism \eqref{retd} and the shift homomorphism \eqref{shifthom} to define a dual version of the  $\gamma^\sigma_\rho-$homomorphism $\widehat\gamma^\sigma_{\rho^\vee}: \widehat{W}^{\mathrm{aff}}\map \{\pm 1\} $ by relation
\begin{equation}\label{dgammah}
 \widehat\gamma^\sigma_{\rho^\vee} (w^{\mathrm{aff}})= \widehat{\theta}_{\rho^\vee}(w^{\mathrm{aff}})\cdot [\sigma \circ \widehat\psi (w^{\mathrm{aff}})], \q w^{\mathrm{aff}} \in \widehat{W}^{\mathrm{aff}}.
\end{equation}
Note that, since for the dual affine reflection $r^\vee_0\in R^\vee$ it holds that $\widehat\psi(r^\vee_0)=r_\eta$, we have from \eqref{sleta} that 
\begin{equation}\label{dsigmar0}
 \sigma^s \circ \widehat\psi (r^\vee_0)=1, \q \sigma^l \circ \widehat\psi (r^\vee_0)=-1.
\end{equation}
The values of the $\widehat\gamma^\sigma_{\rho^\vee}-$homomorphism on any $w^{\mathrm{aff}} \in \widehat{W}^{\mathrm{aff}}$ are determined by its values on the generators from $R^\vee$. Putting together the values from  \eqref{dthetahom}, \eqref{ghomid} --- \eqref{sleta} and \eqref{dsigmar0}, we summarize the values of  $\widehat\gamma^\sigma_{\rho^\vee} (r),\,r\in R$ for a trivial $0-$shift and the non-trivial admissible shift $\rho^\vee$ in Table~\ref{Tab:dhom}.

\begin{table}
{\small \renewcommand{\arraystretch}{1.3}
\begin{tabular}{|c|c|c|c|c||c|c|c|c|}\hline 
                                    & $\widehat\gamma^{\id}_0$ & $\widehat\gamma^{\sigma^e}_0$  &  $\widehat\gamma^{\sigma^s}_0$ & $\widehat\gamma^{\sigma^l}_0$ & $\widehat\gamma^{\id}_{\rho^\vee}$  &  $\widehat\gamma^{\sigma^e}_{\rho^\vee}$ & $\widehat\gamma^{\sigma^s}_{\rho^\vee}$  &  $\widehat\gamma^{\sigma^l}_{\rho^\vee}$\\[1pt] \hline\hline
$r^\vee_0$                           & $ 1$ & $ -1$ & $ -1$ & $ 1$ & $ -1$ & $ 1$ & $ 1$ & $ -1$ \\[3pt]
$r_\al,\,\al\in\Delta_l$ & $ 1$ & $ -1$ &$ 1$ & $ -1$ & $ 1$ & $ -1$ & $ 1$ & $ -1$ \\[3pt]
$r_\al,\,\al\in\Delta_s$ & $ 1$ & $ -1$ & $ -1$ & $ 1$ & $ 1$ & $ -1$ & $ -1$ & $ 1$  \\[3pt]
\hline
\end{tabular}}\medskip
\caption{The values of  $\widehat\gamma^\sigma_{\rho^\vee} (r),\,r\in R^\vee$ for a trivial $0-$shift, the non-trivial admissible shift $\rho^\vee$ and four sign homomorphisms  $\sigma$.} \label{Tab:dhom}
\end{table}

\subsection{Fundamental domains}\

Any sign homomorphism and any admissible dual shift determine a decomposition of the fundamental domain~$F$. The factors of this decomposition are essential for the study of discrete orbit functions. For any sign homomorphism $\sigma$ and any admissible dual shift $\rho$ we introduce a subset $F^\sigma(\rho)$ of $F$
\begin{align*}
F^\sigma(\rho)&=\setb{a\in F}{ \gamma^\sigma_\rho\left( \mathrm{Stab}_{W^{\mathrm{aff}}}(a)\right)=\{1\} }
\end{align*}
with $\gamma^\sigma_\rho-$homomorphism given by \eqref{gammah}. Since for all points of the interior of $F$ the stabilizer is trivial, i.e. $\mathrm{Stab}_{W^{\mathrm{aff}}}(a)=1$, $a\in \mathrm{int} (F)$, the interior $\mathrm{int} (F)$ is a subset of any $F^\sigma(\rho)$. 
Let us also define the corresponding subset $R^\sigma(\rho)$ of generators $R$ of ${W}^{\mathrm{aff}}$
\begin{align}\label{Rsigma}
R^\sigma(\rho)&=\setb{r\in R}{ \gamma^\sigma_\rho\left( r\right)=-1 }
\end{align}
The sets $R^\sigma(\rho)$ are straightforwardly explicitly determined for any case from Table \ref{Tab:hom}.
In order to determine the analytic form of the sets $F^\sigma(\rho)$, we define subsets of the boundaries of $F$
\begin{align}\label{Hrho}
H^\sigma (\rho)&=\set{a\in F}{(\exists r\in R^\sigma(\rho))(ra=a)}.
\end{align}
\begin{tvr}\label{FsFl}For the sets $F^\sigma(\rho)$ it holds that
$$F^\sigma(\rho)=F\setminus H^\sigma (\rho).$$
\end{tvr}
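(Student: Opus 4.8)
The plan is to prove the two inclusions $F^\sigma(\rho) \subseteq F \setminus H^\sigma(\rho)$ and $F \setminus H^\sigma(\rho) \subseteq F^\sigma(\rho)$ separately, exploiting the explicit description of $\mathrm{Stab}_{W^{\mathrm{aff}}}(a)$ for $a \in F$ recalled in Section~2: for $a = y_1\om^\vee_1 + \dots + y_n\om^\vee_n \in F$ the stabilizer is trivial if $a \in \mathrm{int}(F)$, and otherwise is generated by exactly those reflections $r_i \in R$ for which the corresponding barycentric coordinate $y_i$ vanishes. Throughout I will write $I(a) = \{\, i \in \{0,1,\dots,n\} \mid y_i = 0 \,\}$, so that $\mathrm{Stab}_{W^{\mathrm{aff}}}(a) = \langle r_i : i \in I(a)\rangle$.

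First I would prove $F \setminus H^\sigma(\rho) \subseteq F^\sigma(\rho)$, which is the easier direction. Take $a \in F$ with $a \notin H^\sigma(\rho)$; by the definition \eqref{Hrho} this means no $r \in R^\sigma(\rho)$ fixes $a$, i.e. $r_i \notin \mathrm{Stab}_{W^{\mathrm{aff}}}(a)$ for every $i$ with $r_i \in R^\sigma(\rho)$. Equivalently, every generator $r_i$ of $\mathrm{Stab}_{W^{\mathrm{aff}}}(a)$ lies outside $R^\sigma(\rho)$, hence satisfies $\gamma^\sigma_\rho(r_i) = 1$ by \eqref{Rsigma}. Since $\gamma^\sigma_\rho$ is a homomorphism (Proposition~\ref{shiftdhomprop} and \eqref{gammah}) and $\mathrm{Stab}_{W^{\mathrm{aff}}}(a)$ is generated by these $r_i$, it follows that $\gamma^\sigma_\rho$ is trivial on all of $\mathrm{Stab}_{W^{\mathrm{aff}}}(a)$, that is $a \in F^\sigma(\rho)$. (The case $a \in \mathrm{int}(F)$ is immediate since then the stabilizer is trivial.)

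For the reverse inclusion $F^\sigma(\rho) \subseteq F \setminus H^\sigma(\rho)$ I would argue by contraposition: suppose $a \in H^\sigma(\rho)$, so there is some $r \in R^\sigma(\rho)$ with $ra = a$. Then $r \in \mathrm{Stab}_{W^{\mathrm{aff}}}(a)$ and $\gamma^\sigma_\rho(r) = -1$, hence $\gamma^\sigma_\rho(\mathrm{Stab}_{W^{\mathrm{aff}}}(a)) \ne \{1\}$, so $a \notin F^\sigma(\rho)$. This direction is actually entirely formal and needs nothing beyond the definitions.

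The one point that deserves care — and is the only real obstacle — is making sure the identification "$r \in R$ fixes $a$" $\iff$ "$r \in \mathrm{Stab}_{W^{\mathrm{aff}}}(a)$" is used correctly in the first inclusion: a priori $\mathrm{Stab}_{W^{\mathrm{aff}}}(a)$ could contain elements fixing $a$ that are products of generators none of which individually fix $a$. This is ruled out precisely by the recalled structural fact that $\mathrm{Stab}_{W^{\mathrm{aff}}}(a)$ is \emph{generated by} the $r_i$ with $y_i = 0$, each of which does fix $a$; so the set of generators of the stabilizer is exactly $\{\, r_i \in R \mid r_i a = a \,\}$, and a homomorphism vanishes on the whole stabilizer iff it vanishes on each such $r_i$. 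Once this is invoked the argument closes. I would also remark that the sets $R^\sigma(\rho)$ appearing here are read off directly from Table~\ref{Tab:hom}, so that Proposition~\ref{FsFl} together with this table yields the explicit analytic description of each $F^\sigma(\rho)$ as $F$ with certain boundary walls removed.
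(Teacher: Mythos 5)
Your proposal is correct and follows essentially the same route as the paper's proof: the easy direction by exhibiting a generator $r\in R^\sigma(\rho)$ in the stabilizer with $\gamma^\sigma_\rho(r)=-1$, and the converse by using the recalled fact that $\mathrm{Stab}_{W^{\mathrm{aff}}}(a)$ is generated by the reflections $r_i$ with vanishing coordinate $y_i$, all of which then lie in $R\setminus R^\sigma(\rho)$ and have $\gamma^\sigma_\rho$-value $1$. Your additional remark on why the stabilizer cannot contain "hidden" elements beyond those generators is a sensible clarification of the same structural fact the paper invokes implicitly.
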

\begin{proof}
Let $a\in F$. If $a\notin F\setminus H^\sigma (\rho)$, then $a\in H^\sigma (\rho)$, and there exists $r\in R^\sigma(\rho)$ such that $r\in \mathrm{Stab}_{W^{\mathrm{aff}}}(a)$. Then according to \eqref{Rsigma}, we have $\gamma^\sigma_\rho(r)=-1$. Thus, $\gamma^\sigma_\rho \left(\mathrm{Stab}_{W^{\mathrm{aff}}}(a)\right)=\{\pm 1\}$ and consequently $a\notin F^\sigma(\rho)$. Conversely, if $a\in F\setminus H^\sigma (\rho)$, then the stabilizer $\mathrm{Stab}_{W^{\mathrm{aff}}}(a)$ is either trivial or generated by generators from $R\setminus R^\sigma(\rho)$ only. Then, since for any generator $r\in R\setminus R^\sigma(\rho)$ it follows from \eqref{Rsigma} that $ \gamma^\sigma_\rho (r)=1$, we obtain $ \gamma^\sigma_\rho \left(\mathrm{Stab}_{W^{\mathrm{aff}}}(a)\right)=\{ 1\}$, i.e. $a\in F^\sigma(\rho)$.
\end{proof}
The explicit description of all domains $F^\sigma(\rho)$ follows from \eqref{deffun} and Proposition \ref{FsFl}. We define the symbols $y^{\sigma,\rho}_i\in\R$, $i=0,\dots,n$ in the following way:
\begin{equation*}
y^{\sigma,\rho}_i \in \begin{cases}\R^{> 0},\q   r_i\in R^\sigma(\rho) \\ \R^{\geq 0},\q r_i\in R\setminus R^\sigma(\rho).\end{cases}
\end{equation*}
Thus, the explicit form of $F^\sigma(\rho)$ is given by
\begin{equation}\label{FsFlex}
F^\sigma(\rho)=\setb{y^{\sigma,\rho}_1\om^{\vee}_1+\dots+y^{\sigma,\rho}_n\om^{\vee}_n}{y^{\sigma,\rho}_0+y^{\sigma,\rho}_1 m_1+\dots+y^{\sigma,\rho}_n m_n=1  }.
\end{equation}

\subsection{Dual fundamental domains}\

The dual version of the $\gamma-$homomorphism $\widehat\gamma^\sigma_{\rho^\vee}$ also determines a decomposition of the dual fundamental domain~$F^\vee$. The factors of this decomposition are essential for the study of the discretized orbit functions. We define subsets $F^{\sigma\vee}(\rho^\vee)$ of $F^\vee$ by 
\begin{align}\label{FsFldual}
F^{\sigma\vee}(\rho^\vee)&=\setb{a\in F^\vee}{ \widehat\gamma^\sigma_{\rho^\vee}\left( \mathrm{Stab}_{\widehat{W}^{\mathrm{aff}}}(a)\right)=\{1\} }
\end{align}
where $\widehat\gamma^\sigma_{\rho^\vee}$ is given by \eqref{dgammah} and $\rho^\vee$ is an admissible shift. Since for all points of the interior of $F^\vee$ the stabilizer is trivial, i.e. $\mathrm{Stab}_{\widehat{W}^{\mathrm{aff}}}(a)=1$, $a\in \mathrm{int} (F^\vee)$, the interior $\mathrm{int} (F^\vee)$ is a subset of all $F^{\sigma\vee}(\rho^\vee)$ . 
Let us also define the corresponding subset $R^{\sigma\vee}(\rho^\vee)$ of generators $R^\vee$ of $\widehat{W}^{\mathrm{aff}}$
\begin{align}\label{dRsigma}
R^{\sigma\vee}(\rho^\vee)&=\setb{r\in R^\vee}{ \widehat\gamma^\sigma_{\rho^\vee}\left( r\right)=-1 }
\end{align}
The sets $R^{\sigma\vee}(\rho^\vee)$ are straightforwardly explicitly determined for any case from Table \ref{Tab:dhom}.
We define subsets of the boundaries of $F^\vee$ by
\begin{align}\label{Hrhod}
H^{\sigma\vee}(\rho^\vee)&=\set{a\in F^\vee}{(\exists r\in R^{\sigma\vee}(\rho^\vee))(ra=a)}.
\end{align}
Similarly to Proposition \ref{FsFl} we derive its dual version.
\begin{tvr}\label{FsFld}For the sets $F^{\sigma\vee}(\rho^\vee)$ it holds that
$$F^{\sigma\vee}(\rho^\vee)=F^\vee\setminus H^{\sigma\vee}(\rho^\vee).$$
\end{tvr}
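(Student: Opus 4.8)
The plan is to mirror verbatim the argument of Proposition~\ref{FsFl}, replacing $W^{\mathrm{aff}}$, $\gamma^\sigma_\rho$, $R^\sigma(\rho)$, $H^\sigma(\rho)$ by their dual counterparts $\widehat{W}^{\mathrm{aff}}$, $\widehat\gamma^\sigma_{\rho^\vee}$, $R^{\sigma\vee}(\rho^\vee)$, $H^{\sigma\vee}(\rho^\vee)$, and using in place of the structural description of $\mathrm{Stab}_{W^{\mathrm{aff}}}(a)$ for $a\in F$ the analogous description recalled in Section~2 for $\mathrm{Stab}_{\widehat{W}^{\mathrm{aff}}}(a)$ with $a\in F^\vee$: this stabilizer is trivial when $a\in\mathrm{int}(F^\vee)$, and is otherwise generated precisely by those $r^\vee_i$ for which $z_i=0$. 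All the ingredients are already in place: $\widehat\gamma^\sigma_{\rho^\vee}$ is a homomorphism by \eqref{dgammah}, and $R^{\sigma\vee}(\rho^\vee)$, $H^{\sigma\vee}(\rho^\vee)$ are defined in \eqref{dRsigma}, \eqref{Hrhod}.

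First I would prove the inclusion $F^{\sigma\vee}(\rho^\vee)\subseteq F^\vee\setminus H^{\sigma\vee}(\rho^\vee)$ by contraposition. Take $a\in F^\vee$ with $a\in H^{\sigma\vee}(\rho^\vee)$; by \eqref{Hrhod} there exists $r\in R^{\sigma\vee}(\rho^\vee)$ fixing $a$, hence $r\in\mathrm{Stab}_{\widehat{W}^{\mathrm{aff}}}(a)$, and by \eqref{dRsigma} we have $\widehat\gamma^\sigma_{\rho^\vee}(r)=-1$. Since the identity also lies in $\mathrm{Stab}_{\widehat{W}^{\mathrm{aff}}}(a)$ with image $1$, the set $\widehat\gamma^\sigma_{\rho^\vee}(\mathrm{Stab}_{\widehat{W}^{\mathrm{aff}}}(a))$ equals $\{\pm1\}$, not $\{1\}$, and therefore $a\notin F^{\sigma\vee}(\rho^\vee)$ by \eqref{FsFldual}.

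For the reverse inclusion, take $a\in F^\vee\setminus H^{\sigma\vee}(\rho^\vee)$. If $a\in\mathrm{int}(F^\vee)$ then $\mathrm{Stab}_{\widehat{W}^{\mathrm{aff}}}(a)$ is trivial and its image under $\widehat\gamma^\sigma_{\rho^\vee}$ is $\{1\}$. Otherwise $\mathrm{Stab}_{\widehat{W}^{\mathrm{aff}}}(a)$ is generated by the reflections $r^\vee_i$ with $z_i=0$; none of these generators can belong to $R^{\sigma\vee}(\rho^\vee)$, since such a generator fixes $a$ and would place $a$ in $H^{\sigma\vee}(\rho^\vee)$ by \eqref{Hrhod}, contradicting the choice of $a$. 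Hence every generator of $\mathrm{Stab}_{\widehat{W}^{\mathrm{aff}}}(a)$ lies in $R^\vee\setminus R^{\sigma\vee}(\rho^\vee)$, on which $\widehat\gamma^\sigma_{\rho^\vee}$ takes the value $1$ by \eqref{dRsigma}; since $\widehat\gamma^\sigma_{\rho^\vee}$ is a homomorphism, its value on the whole stabilizer is $1$, i.e. $a\in F^{\sigma\vee}(\rho^\vee)$ by \eqref{FsFldual}. Combining the two inclusions yields $F^{\sigma\vee}(\rho^\vee)=F^\vee\setminus H^{\sigma\vee}(\rho^\vee)$.

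There is no genuine obstacle here; the proof is essentially a transcription of the one for Proposition~\ref{FsFl}. The only point requiring a little care is the passage, in the second inclusion, from ``no generator of the stabilizer lies in $R^{\sigma\vee}(\rho^\vee)$'' to ``$\widehat\gamma^\sigma_{\rho^\vee}$ is trivial on the whole stabilizer'', which relies on $\widehat\gamma^\sigma_{\rho^\vee}$ being a group homomorphism and on the stabilizer being generated by the reflections $r^\vee_i$ that fix $a$ --- both facts having been established already in Section~2.
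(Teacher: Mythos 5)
Your proposal is correct and is exactly what the paper intends: the paper gives no separate proof of Proposition~\ref{FsFld}, stating only that it is derived ``similarly to Proposition~\ref{FsFl}'', and your argument is precisely that transcription, using the recalled description of $\mathrm{Stab}_{\widehat{W}^{\mathrm{aff}}}(a)$ for $a\in F^\vee$ and the homomorphism property of $\widehat\gamma^\sigma_{\rho^\vee}$.
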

The explicit description of all domains $F^{\sigma\vee}(\rho^\vee)$ follows from \eqref{deffund} and Proposition \ref{FsFld}. We define the symbols $z^{\sigma,\rho^\vee}_i\in\R$, $i=0,\dots,n$ in the following way:
\begin{equation*}
z^{\sigma,\rho^\vee}_i \in \begin{cases}\R^{> 0},\q   r_i\in R^{\sigma\vee}(\rho^\vee) \\ \R^{\geq 0},\q r_i\in R^\vee\setminus R^{\sigma\vee}(\rho^\vee).\end{cases}
\end{equation*}
Thus, the explicit form of $F^{\sigma\vee}(\rho^\vee)$ is given by
\begin{equation}\label{dFsFlex}
F^{\sigma\vee}(\rho^\vee)=\setb{z^{\sigma,\rho^\vee}_1\om_1+\dots+z^{\sigma,\rho^\vee}_n\om_n}{z^{\sigma,\rho^\vee}_0+z^{\sigma,\rho^\vee}_1 m^\vee_1+\dots+z^{\sigma,\rho^\vee}_n m^\vee_n=1  }.
\end{equation}

\section{Four types of orbit functions}\
\subsection{Symmetries of orbit functions}\

Four sign homomorphisms $\id$, $\sigma^e$, $\sigma^s$ and $\sigma^l$ determine four types of families of complex orbit functions for root systems with two different lenghts of roots. 
If a given root system has only one length of roots there exist two homomorphism $\id$, $\sigma^e$ only.
Within each family of special functions, determined by a sign homomorphism $\sigma$, are the complex functions $\phi^\sigma_b:\R^n\map \C$ standardly labeled by weights $b\in P$. In this article we start by allowing $b\in \R^n$ and we have the orbit functions of the general form
\begin{equation*}%\label{genorb}
\phi^\sigma_b(a)=\sum_{w\in W}\sigma (w)\, e^{2 \pi i \sca{ wb}{a}},\q a\in \R^n.
\end{equation*}
The discretization properties of all four cases of these functions on a finite fragment
of the grid $\frac{1}{M}P^{\vee}$ and with $b\in P$ were described in \cite{HP,HMP}. 

Here we firstly examine the discretization by taking discrete values of $b\in \rho+P$, with $\rho$ an admissible dual shift. Using the $\gamma_\rho^\sigma-$homomorphism, defined by \eqref{gammah}, we are able to describe invariance properties of all types of orbit functions in the following compact form.
\begin{tvr}\label{diskval}
Let $\rho$ be an admissible dual shift and $b\in \rho+P$. Then for any $w^{\mathrm{aff}} \in {W}^{\mathrm{aff}}$ and $a\in \R^n$ it holds that 
\begin{equation}\label{Sssym}
\phi^\sigma_{b}(w^{\mathrm{aff}}a)=\gamma_\rho^\sigma(w^{\mathrm{aff}})\phi^\sigma_{b}(a).
\end{equation}
Moreover the functions $\phi^\sigma_{b}$ are zero  on the boundary $H^\sigma (\rho)$, i.e.
\begin{equation}\label{Fss}
\phi^\sigma_{b}(a')=0,\q  a'\in H^\sigma (\rho).
\end{equation}
\end{tvr}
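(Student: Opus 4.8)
The plan is to prove the transformation law \eqref{Sssym} first, then deduce the vanishing statement \eqref{Fss} from it. For \eqref{Sssym} I would reduce everything to generators. Since $\gamma^\sigma_\rho$ is a homomorphism and $W^{\mathrm{aff}}$ is generated by $R$, it suffices to verify \eqref{Sssym} for $w^{\mathrm{aff}}=r_i$, $i=0,\dots,n$, and then compose: if $\phi^\sigma_b(r_i a)=\gamma^\sigma_\rho(r_i)\phi^\sigma_b(a)$ holds for every generator and every $a$, then for a word $w^{\mathrm{aff}}=r_{i_1}\cdots r_{i_s}$ one applies the single-generator identity $s$ times, and $\gamma^\sigma_\rho(r_{i_1})\cdots\gamma^\sigma_\rho(r_{i_s})=\gamma^\sigma_\rho(w^{\mathrm{aff}})$ by the homomorphism property. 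For the $n$ linear generators $r_1,\dots,r_n$ one computes directly: $r_i\in W$, so $\phi^\sigma_b(r_i a)=\sum_{w\in W}\sigma(w)e^{2\pi i\sca{wb}{r_i a}}=\sum_{w\in W}\sigma(w)e^{2\pi i\sca{r_i wb}{a}}$ (using that $r_i$ is an orthogonal involution), and reindexing $w\mapsto r_i w$ with $\sigma(r_i w)=\sigma(r_i)\sigma(w)$ gives $\sigma(r_i)\phi^\sigma_b(a)$; since $\tau(r_i)=0$ we have $\theta_\rho(r_i)=1$ and $\psi(r_i)=r_i$, so $\gamma^\sigma_\rho(r_i)=\sigma(r_i)$, matching. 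The affine generator $r_0$ is the substantive case.

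For $r_0$, write $r_0 a=r_\xi a+\tfrac{2\xi}{\sca{\xi}{\xi}}$ from \eqref{aWeyl}. Then
\[
\phi^\sigma_b(r_0 a)=\sum_{w\in W}\sigma(w)\,e^{2\pi i\sca{wb}{r_\xi a}}\,e^{2\pi i\sca{wb}{2\xi/\sca{\xi}{\xi}}}.
\]
The key point is the extra exponential factor $e^{2\pi i\sca{wb}{2\xi/\sca{\xi}{\xi}}}=e^{2\pi i\sca{wb}{\xi^\vee}}$. Writing $b=\rho+p$ with $p\in P$, we split this as $e^{2\pi i\sca{wp}{\xi^\vee}}\,e^{2\pi i\sca{w\rho}{\xi^\vee}}$. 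Since $\xi^\vee\in Q^\vee$ and $P$ is $W$-invariant with $P$ being $\Z$-dual to $Q^\vee$, the first factor equals $1$. For the second factor I use \eqref{shiftdexp}, which gives $e^{2\pi i\sca{w\rho}{\xi^\vee}}=e^{2\pi i\sca{w\xi^\vee}{\rho}}$—wait, more carefully: $\sca{w\rho}{\xi^\vee}=\sca{\rho}{w^{-1}\xi^\vee}$, and by \eqref{shiftdexp} applied to $q^\vee=w^{-1}\xi^\vee\in Q^\vee$ this equals $e^{2\pi i\sca{w^{-1}\xi^\vee}{\rho}}$ up to the $w$-action, so in all cases $e^{2\pi i\sca{w\rho}{\xi^\vee}}=e^{2\pi i\sca{\rho}{\xi^\vee}}=e^{2\pi i\,\tfrac{2\sca{\xi}{\rho}}{\sca{\xi}{\xi}}}$, and \eqref{shiftdd} evaluates this to $e^{\pi i}=-1$ for a non-trivial $\rho$ (and to $1$ for the trivial shift). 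Hence the exponential factor is independent of $w$ and equals $\theta_\rho(r_0)$. Pulling it out,
\[
\phi^\sigma_b(r_0 a)=\theta_\rho(r_0)\sum_{w\in W}\sigma(w)\,e^{2\pi i\sca{r_\xi wb}{a}},
\]
and reindexing $w\mapsto r_\xi w$ converts the sum to $\sigma(r_\xi)\phi^\sigma_b(a)=[\sigma\circ\psi(r_0)]\phi^\sigma_b(a)$; multiplying, $\phi^\sigma_b(r_0 a)=\theta_\rho(r_0)[\sigma\circ\psi(r_0)]\phi^\sigma_b(a)=\gamma^\sigma_\rho(r_0)\phi^\sigma_b(a)$. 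This completes the generator check, hence \eqref{Sssym}.

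For \eqref{Fss}: let $a'\in H^\sigma(\rho)$, so by \eqref{Hrho} there exists $r\in R^\sigma(\rho)$ with $ra'=a'$. Applying \eqref{Sssym} with $w^{\mathrm{aff}}=r$ gives $\phi^\sigma_b(a')=\phi^\sigma_b(ra')=\gamma^\sigma_\rho(r)\phi^\sigma_b(a')$; but $r\in R^\sigma(\rho)$ means $\gamma^\sigma_\rho(r)=-1$ by \eqref{Rsigma}, so $\phi^\sigma_b(a')=-\phi^\sigma_b(a')$, forcing $\phi^\sigma_b(a')=0$. The main obstacle is bookkeeping in the $r_0$ step—correctly handling the $w$-dependence of the translational exponential $e^{2\pi i\sca{wb}{\xi^\vee}}$ and confirming via \eqref{shiftdexp} and \eqref{shiftdd} that it collapses to the constant $\theta_\rho(r_0)$; everything else is reindexing the Weyl-group sum and invoking the homomorphism property of $\gamma^\sigma_\rho$.
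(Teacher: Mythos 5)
Your proof is correct, and its core computation is the same one the paper relies on: writing $b=\rho+p$, killing the $P$-part of the translational exponential by $\Z$-duality with $Q^\vee$, and using \eqref{shiftdexp} to make the $\rho$-part independent of the summation variable $w$, after which a reindexing $w\mapsto r_\xi w$ of the Weyl sum produces the sign $\sigma(\psi(r_0))$. The organization differs, though: you verify \eqref{Sssym} generator by generator (with $r_0$ as the only nontrivial case, where the translation is $\xi^\vee$) and then extend to all of $W^{\mathrm{aff}}$ via the homomorphism property of $\gamma^\sigma_\rho$, which you must import from the earlier sections (it holds because $\theta_\rho$ and $\sigma\circ\psi$ are homomorphisms into $\{\pm1\}$). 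The paper instead treats a general element $w^{\mathrm{aff}}a=w'a+q^\vee$ in one shot, showing directly that $e^{2\pi i\sca{wb}{q^\vee}}=\theta_\rho(w^{\mathrm{aff}})$ for arbitrary $q^\vee\in Q^\vee$; this is marginally shorter and does not need the generator reduction or the homomorphism property as an input. Your route buys a cleaner isolation of where the boundary behavior comes from ($r_0$ versus the linear reflections), at the cost of an extra composition step; note also that the explicit evaluation $e^{2\pi i\sca{\rho}{\xi^\vee}}=-1$ via \eqref{shiftdd} is not strictly needed, since all you require is that this constant equals $\theta_\rho(r_0)$ by definition \eqref{shiftdhom}. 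The deduction of \eqref{Fss} from \eqref{Sssym} and \eqref{Rsigma} is exactly the paper's argument.
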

\begin{proof}
Considering an element of the affine Weyl group of the form $w^{\mathrm{aff}}a=w'a+q^\vee$, $w'\in W$, $q^\vee\in Q^\vee$ and $b=\rho +\la$, with $\la\in P$, we firstly calculate that
\begin{align*}
 e^{2 \pi i \sca{ wb}{q^\vee}}&=e^{2 \pi i \sca{ \la+\rho}{w^{-1}q^\vee}}=e^{2 \pi i \sca{ \rho}{w^{-1}q^\vee}}= e^{2 \pi i \sca{ \rho}{q^\vee}}=\theta_\rho (w^{\mathrm{aff}}) 
\end{align*}
where the first equality follows from $w\in W$ being an orthogonal map, the second from $W-$invariance of $Q^\vee$ and $\Z-$duality of the lattices $P$ and $Q^\vee$, the third is equation \eqref{shiftdexp}. 
Then we have that 
\begin{align*}
\phi^\sigma_{b}(w^{\mathrm{aff}}a)&=\sum_{w\in W}\sigma (w)\, e^{2 \pi i \sca{ wb}{w'a+q^\vee}}=\sum_{w\in W}\sigma (w)\, e^{2 \pi i \sca{ wb}{w'a}}  e^{2 \pi i \sca{ wb}{q^\vee}}\\
   &=\theta_\rho (w^{\mathrm{aff}}) \sum_{w\in W}\sigma (w)\, e^{2 \pi i \sca{ wb}{w'a}}= \theta_\rho (w^{\mathrm{aff}}) \sigma(w')\phi^\sigma_{b}(a)
\end{align*}
which is equation \eqref{Sssym}. Putting the generators \eqref{Rsigma} from $R^\sigma(\rho)$ into \eqref{Sssym} we immediately have for the points from $H^\sigma(\rho)$ that 
\begin{align*}
\phi^\sigma_{b}(a)&=\phi^\sigma_{b}(ra)=-\phi^\sigma_{b}(a), \q r\in  R^\sigma(\rho), \q a\in H^\sigma(\rho) .
\end{align*}
\end{proof}
Thus, all orbit functions are (anti)symmetric with respect to the action of the affine Weyl group~$W^{\mathrm{aff}}$. This allows us to consider the values $\phi^\sigma_{b}(a)$ only for points of the fundamental domain $a\in F$. Moreover,
from \eqref{Fss} and Proposition \ref{FsFl} we conclude that {\it we can consider the functions $\phi^\sigma_{b}$, $b\in \rho+P$  on the  domain $F^\sigma(\rho)$ only}.

Secondly, reverting to a general $b\in \R^n$, we discretize the values of $a$ by taking $a\in \frac{1}{M}(\rho^\vee +P^{\vee})$  with $\rho^\vee$ an admissible shift and any $M\in \N$. 
Using the $\widehat\gamma_{\rho^\vee}^\sigma-$homomorphism, defined by \eqref{dgammah}, we describe invariance with respect to the action of $\widehat{W}^{\mathrm{aff}} $ in the following compact form.
\begin{tvr}\label{disklab} 
Let $\rho^\vee$ be an admissible shift and $a\in \frac{1}{M}(\rho^\vee +P^{\vee})$ with $M\in \N$. Then for any $w^{\mathrm{aff}} \in \widehat{W}^{\mathrm{aff}}$ and $b\in \R^n$ it holds that 
\begin{equation}\label{dSssym}
\phi^\sigma_{Mw^{\mathrm{aff}}(\frac{b}{M})}(a)=\widehat\gamma_{\rho^\vee}^\sigma(w^{\mathrm{aff}})\phi^\sigma_{b}(a).
\end{equation}
Moreover the functions $\phi^\sigma_{b}$ are identically zero on the boundary $MH^{\sigma \vee}(\rho^\vee)$, i.e.
\begin{equation}\label{dFss}
\phi^\sigma_{b}\equiv 0,\q  b\in MH^{\sigma \vee}(\rho^\vee).
\end{equation}
\end{tvr}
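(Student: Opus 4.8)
The plan is to mirror the proof of Proposition~\ref{diskval}, swapping the roles of $P$ and $P^\vee$ (and of $Q$ and $Q^\vee$) and letting the dual affine Weyl group act on the label $b$ rather than on the argument $a$. Write $w^{\mathrm{aff}}=T(q)w'$ with the unique $w'=\widehat\psi(w^{\mathrm{aff}})\in W$ and $q=\widehat\tau(w^{\mathrm{aff}})\in Q$, so that $Mw^{\mathrm{aff}}(b/M)=w'b+Mq$, and fix $a$ with $Ma=\rho^\vee+p^\vee$ for some $p^\vee\in P^\vee$. The crux of the argument is the identity
\begin{equation*}
e^{2\pi i\sca{w(Mq)}{a}}=\widehat\theta_{\rho^\vee}(w^{\mathrm{aff}}),\qquad w\in W,
\end{equation*}
whose right-hand side does not depend on $w$: indeed $\sca{w(Mq)}{a}=\sca{wq}{Ma}=\sca{wq}{\rho^\vee}+\sca{wq}{p^\vee}$, where $\sca{wq}{p^\vee}\in\Z$ because $wq\in Q$ by $W$-invariance of the root lattice and $P^\vee$ is $\Z$-dual to $Q$, whereas $e^{2\pi i\sca{wq}{\rho^\vee}}=e^{2\pi i\sca{q}{\rho^\vee}}$ by \eqref{shiftexp}, and this last quantity equals $\widehat\theta_{\rho^\vee}(w^{\mathrm{aff}})$ by \eqref{shifthom} together with $\widehat\tau(w^{\mathrm{aff}})=q$.

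Granting this, I would expand
\begin{equation*}
\phi^\sigma_{Mw^{\mathrm{aff}}(b/M)}(a)=\sum_{w\in W}\sigma(w)\,e^{2\pi i\sca{w(w'b)}{a}}\,e^{2\pi i\sca{w(Mq)}{a}},
\end{equation*}
pull the $w$-independent factor $\widehat\theta_{\rho^\vee}(w^{\mathrm{aff}})$ out of the sum, and bring the remaining sum back to $\phi^\sigma_b(a)$ via the reindexing $u=ww'$, using multiplicativity of $\sigma$ and $\sigma((w')^{-1})=\sigma(w')$ (the values lie in $\{\pm1\}$); this produces the factor $\sigma(w')=\sigma(\widehat\psi(w^{\mathrm{aff}}))$. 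Multiplying the two factors and invoking the definition \eqref{dgammah} of $\widehat\gamma^\sigma_{\rho^\vee}$ gives \eqref{dSssym}. The vanishing statement \eqref{dFss} then follows exactly as \eqref{Fss} did in Proposition~\ref{diskval}: if $b\in MH^{\sigma\vee}(\rho^\vee)$, then $b/M\in H^{\sigma\vee}(\rho^\vee)$, so by \eqref{Hrhod} there is a generator $r\in R^{\sigma\vee}(\rho^\vee)$ with $r(b/M)=b/M$, hence $Mr(b/M)=b$; inserting $w^{\mathrm{aff}}=r$ into \eqref{dSssym} and using $\widehat\gamma^\sigma_{\rho^\vee}(r)=-1$ from \eqref{dRsigma} yields $\phi^\sigma_b(a)=-\phi^\sigma_b(a)$, i.e. $\phi^\sigma_b(a)=0$ for every $a\in\frac1M(\rho^\vee+P^\vee)$.

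The individual manipulations are routine once the bookkeeping is set up, and the step needing real care is the exponential identity above. One must check that the factor $M$ in $Mw^{\mathrm{aff}}(b/M)$ exactly absorbs the $1/M$ implicit in $a\in\frac1M(\rho^\vee+P^\vee)$, leaving $\sca{wq}{\rho^\vee+p^\vee}$, and one must keep straight that here it is $P^\vee$ --- not $P$ --- that must be $\Z$-dual to the $W$-invariant lattice $Q$ containing $wq$ (this is the precise place where the roles of the two lattices are interchanged relative to Proposition~\ref{diskval}). It is also worth noting, for \eqref{dFss}, that $r^\vee_0$ is an affine reflection, so $r^\vee_0(b/M)=b/M$ does \emph{not} assert that $r_\eta$ fixes $b$; nevertheless \eqref{dSssym} applies verbatim with $w^{\mathrm{aff}}=r^\vee_0$, since the argument uses only the relation $Mr^\vee_0(b/M)=b$.
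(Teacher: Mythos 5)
Your proof is correct and follows essentially the same route as the paper's: the same key computation showing $e^{2\pi i\sca{w(Mq)}{a}}=\widehat\theta_{\rho^\vee}(w^{\mathrm{aff}})$ independently of $w\in W$ (via $W$-invariance of $Q$, $\Z$-duality of $P^\vee$ and $Q$, and \eqref{shiftexp}), followed by factoring this out of the orbit sum, reindexing to extract $\sigma(w')$, and specializing \eqref{dSssym} to generators of $R^{\sigma\vee}(\rho^\vee)$ for the vanishing statement. The only cosmetic difference is that the paper moves the Weyl group element onto the argument $a$ inside the exponent, while you keep it on the label, which is the same computation by orthogonality of $w$.
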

\begin{proof}
Considering an element of the dual affine Weyl group of the form $w^{\mathrm{aff}}a=w'a+q$ and $a=(\rho^\vee +s)/M$, with $s\in P^\vee$, we firstly calculate that
\begin{align*}
 e^{2 \pi i \sca{ qM}{wa}}&=e^{2 \pi i \sca{w^{-1} q}{\rho^\vee+s}}=e^{2 \pi i \sca{ \rho^\vee}{w^{-1}q}}= e^{2 \pi i \sca{ \rho^\vee}{q}}=\widehat\theta_{\rho^\vee} (w^{\mathrm{aff}}) 
\end{align*}
where the first equality follows from $w\in W$ being an orthogonal map, the second from $W-$invariance of $Q$ and $\Z-$duality of the lattices $P^\vee$ and $Q$, the third is equation \eqref{shiftexp}. 
Then we have that 
\begin{align*}
\phi^\sigma_{Mw^{\mathrm{aff}}(\frac{b}{M})}(a)&=\sum_{w\in W}\sigma (w)\, e^{2 \pi i \sca{ w'b+qM}{wa}}=\sum_{w\in W}\sigma (w)\, e^{2 \pi i \sca{ w'b}{wa}}  e^{2 \pi i \sca{ qM}{wa}}\\
   &=\widehat\theta_{\rho^\vee} (w^{\mathrm{aff}}) \sum_{w\in W}\sigma (w)\, e^{2 \pi i \sca{ w'b}{wa}}= \widehat\theta_{\rho^\vee} (w^{\mathrm{aff}}) \sigma(w')\phi^\sigma_{b}(a)
\end{align*}
which is equation \eqref{dSssym}. Putting the generators \eqref{dRsigma} from $R^{\sigma\vee}(\rho^\vee)$ into \eqref{dSssym} we immediately have for the points from $MH^{\sigma \vee}(\rho^\vee)$ that 
\begin{align*}
\phi^\sigma_{b}(a)&=\phi^\sigma_{Mr(\frac{b}{M})}(a)=-\phi^\sigma_{b}(a), \q r\in  R^{\sigma\vee}(\rho^\vee), \q b\in MH^{\sigma \vee}(\rho^\vee) .
\end{align*}
\end{proof}
Thus, all orbit functions are (anti)symmetric with respect to the action of the modified dual affine Weyl group --- the group $T(M Q) \rtimes W$. This allows us to consider the functions $\phi^\sigma_{b}$ only for labels from the fundamental domain $b\in MF^\vee$. Moreover,
from \eqref{dFss} and Proposition \ref{FsFld} we conclude that {\it we can consider the functions $\phi^\sigma_{b}(a)$, $a\in \frac{1}{M}(\rho^\vee +P^{\vee})$  with the labels $b$ from $MF^{\sigma\vee}(\rho^\vee)$ only}.

\subsection{Discretization of orbit functions}\

By discretization of orbit functions we mean formulating their discrete Fourier calculus; at this point this step is straightforward --- we discretize both the arguments of the functions $\phi^\sigma_b(a)$ to $\frac{1}{M}(\rho^\vee +P^{\vee})$ and the labels to $\rho+P$ with $\rho$ and $\rho^\vee$ being admissible. Combining propositions \ref{diskval} and \ref{disklab}, we obtain that the discrete values $a$ of the functions $\phi^\sigma_b(a)$ can be taken only on the set
\begin{equation}\label{Fs}
F^\sigma_M (\rho,\rho^\vee)= \left[\frac{1}{M}(\rho^\vee+P^{\vee})\right]\cap F^\sigma(\rho)
\end{equation}
and the labels $b$ are then considered in the set 
\begin{equation}\label{Ls}
\Lambda^\sigma_M(\rho,\rho^\vee) =  (\rho+P) \cap MF^{\sigma\vee}(\rho^\vee).
\end{equation}
The explicit form of these sets is crucial for application and can be for each case of admissible shifts $\rho^\vee=\rho^\vee_1\om_1^\vee+\dots + \rho^\vee_n\om_n^\vee $ and admissible dual shifts $\rho=\rho_1\om_1+\dots + \rho_n\om_n $ derived from \eqref{FsFlex} and \eqref{dFsFlex}. If we define the  symbols 
\begin{equation*}
u^{\sigma,\rho}_i \in \begin{cases}\N,\q   r_i\in R^\sigma(\rho) \\ \Z^{\geq 0},\q r_i\in R\setminus R^\sigma(\rho)\end{cases}
\end{equation*}
and the symbols
\begin{equation*}
t^{\sigma,\rho^\vee}_i \in \begin{cases}\N,\q   r_i\in R^{\sigma\vee}(\rho^\vee) \\ \Z^{\geq 0},\q r_i\in R^\vee\setminus R^{\sigma\vee}(\rho^\vee)\end{cases}
\end{equation*}
we obtain the set $F^\sigma_M (\rho,\rho^\vee)$ explicitly
\begin{align}\label{FMrho}
F^\sigma_M (\rho,\rho^\vee) &=\setb{\frac{u^{\sigma,\rho}_1+\rho^\vee_1}{M}\om^{\vee}_1+\dots+\frac{u^{\sigma,\rho}_n+\rho^\vee_n}{M}\om^{\vee}_n}{u^{\sigma,\rho}_0+u^{\sigma,\rho}_1 m_1+\dots+u^{\sigma,\rho}_n m_n=M  }
\end{align}
and also the set $\Lambda^\sigma_M(\rho,\rho^\vee)$
\begin{align}\label{LMrho}
\Lambda^\sigma_M (\rho,\rho^\vee) &=\setb{(t^{\sigma,\rho^\vee}_1+\rho_1)\om_1+\dots+(t^{\sigma,\rho^\vee}_n+\rho_n)\om_n}{t^{\sigma,\rho^\vee}_0+t^{\sigma,\rho^\vee}_1 m^\vee_1+\dots+t^{\sigma,\rho^\vee}_n m^\vee_n=M  }.
\end{align}
The counting formulas for the numbers of elements of all $F^\sigma_M (0,0)$ are derived in \cite{HP,HMP}. Recall from \cite{HP} that the number of elements of $|F^\id_M(0,0)|$ of $C_n,\, n\geq 2$ is equal to $\nu_n (M)$, where $\nu_n (M)$ is given on odd and even values of $M$ by,
\begin{align*}
\nu_n (2k)=& \begin{pmatrix}n+k \\ n \end{pmatrix}+\begin{pmatrix}n+k-1 \\ n \end{pmatrix}\\
\nu_n (2k+1)= & 2\begin{pmatrix}n+k \\ n \end{pmatrix} .
\end{align*}

\begin{thm}\label{numAn}
Let $\rho^\vee$, $\rho$  be the non-trivial admissible and the dual admissible shifts, respectively.  The numbers of points of grids $F^\sigma_M$ of Lie algebras $A_1$, $B_n$, $C_n$ are given by the following relations.
\begin{enumerate}\item  $A_1$,
\begin{align*}
|F^{\id}_M (\rho,0)|= & |F^{\id}_M (0,\rho^\vee)|=|F^{\id}_M (\rho,\rho^\vee)|=M-1, \\
|F^{\sigma^e}_M (\rho,0)|= & |F^{\sigma^e}_M (0,\rho^\vee)|=|F^{\sigma^e}_M (\rho,\rho^\vee)|=M-1.
\end{align*}
\item $C_2,$
\begin{align*}
|F^{\id}_M (\rho,0)|= & |F^{\id}_M (0,\rho^\vee)|=|F^{\sigma^s}_M (0,\rho^\vee)|=|F^{\sigma^l}_M (\rho,0)|=\nu_2 (M-1) \\
|F^{\id}_M (\rho,\rho^\vee)|= & |F^{\sigma^e}_M (\rho,\rho^\vee)|=|F^{\sigma^s}_M (\rho,\rho^\vee)|=|F^{\sigma^l}_M  (\rho,\rho^\vee)|=\nu_2 (M-2),\\
|F^{\sigma^e}_M (0,\rho^\vee)|= & |F^{\sigma^e}_M (\rho,0)|=|F^{\sigma^s}_M (\rho,0)|=|F^{\sigma^l}_M  (0,\rho^\vee)|=\nu_2 (M-3),
\end{align*}
\item $C_n$, $n\geq 3,$
\begin{align*}
|F^{\id}_M (\rho,0)|& =|F^{\sigma^l}_M (\rho,0)|= \nu_n (M-1) \\
|F^{\sigma^e}_M (\rho,0)|& =|F^{\sigma^s}_M (\rho,0)|=\nu_n (M-2n+1)
\end{align*}
\item $B_n$, $n\geq 3,$
\begin{align*}
|F^{\id}_M (0,\rho^\vee)|& =|F^{\sigma^s}_M (0,\rho^\vee)|= \nu_n (M-1) \\
|F^{\sigma^e}_M (0,\rho^\vee)|& =|F^{\sigma^l}_M (0,\rho^\vee)|=\nu_n (M-2n+1).
\end{align*}
\end{enumerate}
\end{thm}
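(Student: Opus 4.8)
The plan is to turn each cardinality into a count of non-negative integer solutions of a single linear equation and then recognise that count as a value of $\nu_n$. By \eqref{FMrho}, a point of $F^\sigma_M(\rho,\rho^\vee)$ corresponds to a tuple $(u^{\sigma,\rho}_0,\dots,u^{\sigma,\rho}_n)$ with $u^{\sigma,\rho}_0+u^{\sigma,\rho}_1m_1+\dots+u^{\sigma,\rho}_nm_n=M$, where $u^{\sigma,\rho}_i$ ranges over $\N$ if $r_i\in R^\sigma(\rho)$ and over $\Z^{\geq 0}$ otherwise, and where the coordinate shift $\rho^\vee$ enters through the barycentric coordinates. Thus $|F^\sigma_M(\rho,\rho^\vee)|$ is the number of non-negative integer solutions of $u_0+\sum_i m_iu_i=M$ in which the variables indexed by $R^\sigma(\rho)$ are forced to be at least $1$, corrected for $\rho^\vee$. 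The sets $R^\sigma(\rho)$ are read off Tables~\ref{Tab:hom} and~\ref{Tab:dhom} for every admissible case of Table~\ref{Tab:shifts}, and the marks are $(2,\dots,2,1)$ for $C_n$ and $(1,2,\dots,2)$ for $B_n$, together with $m_0=1$.

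First I would absorb a nontrivial coordinate shift. Every nontrivial $\rho^\vee$ in Table~\ref{Tab:shifts} equals $\tfrac12\om^\vee_k$ with $m_k$ even ($m_1=2$ for $C_2$, $m_n=2$ for $B_n$, $n\geq 3$; the single odd instance $A_1$ with $m_1=1$ is handled separately). Shifting the $k$-th barycentric coordinate by $\tfrac12$ has two consequences: on the face carrying that coordinate, ``$y_k\geq 0$'' or ``$y_k>0$'' becomes ``$u_k+\tfrac12>0$'', i.e. $u_k\geq 0$ in either case, so $\rho^\vee$ \emph{neutralizes} the strictness of $r_k$; and since $\sum_i m_i\rho^\vee_i=\tfrac12 m_k$ is an integer, the variable $u_0$ stays integral but the weight equation becomes $u_0+\sum_i m_iu_i=M-\tfrac12 m_k=M-1$. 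After this reduction the count equals the number of $(u_0,\dots,u_n)\in(\Z^{\geq 0})^{n+1}$ with $u_0+\sum_i m_iu_i=M-\delta_{\rho^\vee}$, where $\delta_{\rho^\vee}=1$ if $\rho^\vee\neq 0$ and $\delta_{\rho^\vee}=0$ otherwise, the forced-positive variables being exactly the $r_i$ with $i\neq k$ lying in $R^\sigma(\rho)$.

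Next, substituting $u_i\mapsto u_i-1$ for each remaining forced-positive variable lowers the right-hand side by the sum of the corresponding marks and leaves a purely non-negative system $u_0+\sum_i m_iu_i=M'$. Because for both $C_n$ and $B_n$ the multiset $\{m_0,m_1,\dots,m_n\}$ consists of two $1$'s and $n-1$ twos, the number of its non-negative solutions is in either case $\nu_n(M')$, the function recalled just before the theorem; in short $|F^\sigma_M(\rho,\rho^\vee)|=\nu_n\big(M-\delta_{\rho^\vee}-\sum_{r_i\in R^\sigma(\rho),\,i\neq k}m_i\big)$ outside the $A_1$ case. Reading off: for $C_n$, $n\geq 3$ (with $\rho^\vee=0$), $R^{\id}(\rho)=\{r_0\}$ and $R^{\sigma^l}(\rho)=\{r_n\}$ have mark-sum $1$, giving $\nu_n(M-1)$, whereas $R^{\sigma^e}(\rho)=\{r_1,\dots,r_n\}$ and $R^{\sigma^s}(\rho)=\{r_0,\dots,r_{n-1}\}$ have mark-sum $2n-1$, giving $\nu_n(M-2n+1)$; for $B_n$, $n\geq 3$ (with $\rho^\vee=\tfrac12\om^\vee_n$, $\delta_{\rho^\vee}=1$, $r_n$ neutralized), $R^{\id}(0)=\emptyset$ and $R^{\sigma^s}(0)=\{r_n\}$ both yield an empty effective index set and $\nu_n(M-1)$, while $R^{\sigma^e}(0)=R$ and $R^{\sigma^l}(0)=\{r_0,\dots,r_{n-1}\}$ both collapse to $\{r_0,\dots,r_{n-1}\}$ of mark-sum $2n-2$ and $\nu_n(M-1-(2n-2))=\nu_n(M-2n+1)$; for $C_2$ the same computation with $n=2$, recording in each of the twelve entries whether $\rho^\vee$ is present (then $\delta_{\rho^\vee}=1$ and $r_1$ is neutralized) and summing the marks of the surviving forced-positive generators, produces the right-hand sides $M-1$, $M-2$, $M-3$ matching $\nu_2(M-1)$, $\nu_2(M-2)$, $\nu_2(M-3)$. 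The six $A_1$ cardinalities are then immediate one-dimensional counts.

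The part needing the most care is the bookkeeping of the half-shift: one must fix which barycentric coordinate is shifted, confirm that its fractional part produces exactly the stated open or closed behaviour on the two faces meeting it (the $r_0$-face and the $r_k$-face), and check that it is precisely the neutralization of $r_k$ that produces the coincidences $\id\leftrightarrow\sigma^s$ and $\sigma^e\leftrightarrow\sigma^l$ in the $B_n$ list, with the analogous collapses in the $C_2$ list. Once that is pinned down, each cardinality reduces by an elementary index shift to the known count $\nu_n$.
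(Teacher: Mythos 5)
Your proposal is correct and follows essentially the same route as the paper: the paper also counts $|F^\sigma_M(\rho,\rho^\vee)|$ via the explicit parametrization \eqref{FMrho}, reads $R^\sigma(\rho)$ off Table~\ref{Tab:hom}, and shifts the forced-positive variables $u_i\mapsto u_i-1$ to reduce each case to the unshifted count $\nu_n$, carrying this out in detail only for $C_n$, $\sigma^s$, $(\rho,0)$ and declaring the remaining cases analogous. Your treatment of the grid shift $\rho^\vee=\tfrac12\om^\vee_k$ (neutralizing the strictness at $r_k$ and lowering the right-hand side by $\tfrac12 m_k$) is in fact a more explicit piece of bookkeeping than the paper displays, and it correctly reproduces all the stated values for $C_2$, $C_n$ and $B_n$.
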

\begin{proof}
The number of points is calculated for each case of admissible shifts using formula \eqref{FMrho}.
For the case $C_n$, we have from Table \ref{Tab:hom} that $R^{\sigma^s}(\rho)=\{r_0,r_1,\dots, r_{n-1} \}$ and thus
\begin{align}\label{numCn}
F^{\sigma^s}_M (\rho,0) &=\setb{\frac{u^{\sigma^s,\rho}_1}{M}\om^{\vee}_1+\dots+\frac{u^{\sigma^s,\rho}_n}{M}\om^{\vee}_n}{u^{\sigma^s,\rho}_0+2u^{\sigma^s,\rho}_1 +\dots+2u^{\sigma^s,\rho}_{n-1}+u^{\sigma^s,\rho}_n =M  }
\end{align}
where
\begin{equation*}
u^{\sigma^s,\rho}_i \in \begin{cases}\N,\q   i\in \{0,1,\dots,n-1\}, \\ \Z^{\geq 0},\q i=n. \end{cases}
\end{equation*}
Introducing new variables $\wt u^{\sigma^s,\rho}_i\in \Z^{\geq 0}$ and setting 
\begin{equation*}
u^{\sigma^s,\rho}_i = \begin{cases}\wt u^{\sigma^s,\rho}_i+1,\q   i\in \{0,1,\dots,n-1\}, \\ \wt u^{\sigma^s,\rho}_i,\q i=n. \end{cases}
\end{equation*}
the defining equation in the set \eqref{numCn} can be rewritten as 
\begin{equation*}
\wt u^{\sigma^s,\rho}_0+2\wt u^{\sigma^s,\rho}_1 +\dots+2\wt u^{\sigma^s,\rho}_{n-1}+\wt u^{\sigma^s,\rho}_n =M-2n+1.	
\end{equation*}
This equation is the same as the defining equation of $F^{\id}_{M-2n+1} (0,0)$ of $C_n$, hence 
\begin{equation}\label{countCn}
|F^{\sigma^s}_M (\rho,0)|	=|F^{\id}_{M-2n+1} (0,0)|
\end{equation}
and the counting formula follows. Similarly, we obtain formulas for the remaining cases.
\end{proof}
\begin{example}\label{ex1}
For the Lie algebra $C_2$, we have its determinant of the Cartain matrix $c=2$. For $M=4$ is the order of the group $\frac{1}{4}P^{\vee}/Q^{\vee}$ equal to $32$, and according to Theorem \ref{numAn} we calculate 
\begin{align*}
|F^{\id}_4 (\rho,0)|= & |F^{\id}_4 (0,\rho^\vee)|=|F^{\sigma^s}_4 (0,\rho^\vee)|=|F^{\sigma^l}_4 (\rho,0)|=6 \\
|F^{\id}_4 (\rho,\rho^\vee)|= & |F^{\sigma^e}_4 (\rho,\rho^\vee)|=|F^{\sigma^s}_4 (\rho,\rho^\vee)|=|F^{\sigma^l}_4  (\rho,\rho^\vee)|=4,\\
|F^{\sigma^e}_4 (0,\rho^\vee)|= & |F^{\sigma^e}_4 (\rho,0)|=|F^{\sigma^s}_4 (\rho,0)|=|F^{\sigma^l}_4  (0,\rho^\vee)|=2,
\end{align*}
Let us denote the boundaries of the triangle $F$ which are stabilized by the reflection $r_i$ by $b_i$, $i\in\{0,1,2\}$. Then the sets $H^\sigma (\rho)$ of boundaries \eqref{Hrho} are of the explicit form 
\begin{equation*}
\begin{alignedat}{3}
H^\id (\rho)&=\{b_0\}, &\q H^{\sigma^e} (\rho)&=\{b_1,b_2\},  \\
H^{\sigma^s} (\rho)&=\{b_0,b_1\}, &\q H^{\sigma^l} (\rho)&=\{b_2\}.
\end{alignedat}
\end{equation*}
The coset representatives of the finite group $\frac{1}{4}P^{\vee}/Q^{\vee}$, the shifted representatives of the set of cosets $\frac{1}{4}(\rho^\vee + P^{\vee})/Q^{\vee}$  and the fundamental domain $F$ with its boundaries $\{b_0,b_1,b_2\}$ are depicted in Figure~\ref{figC2}.
\begin{figure}%[h!]
\includegraphics{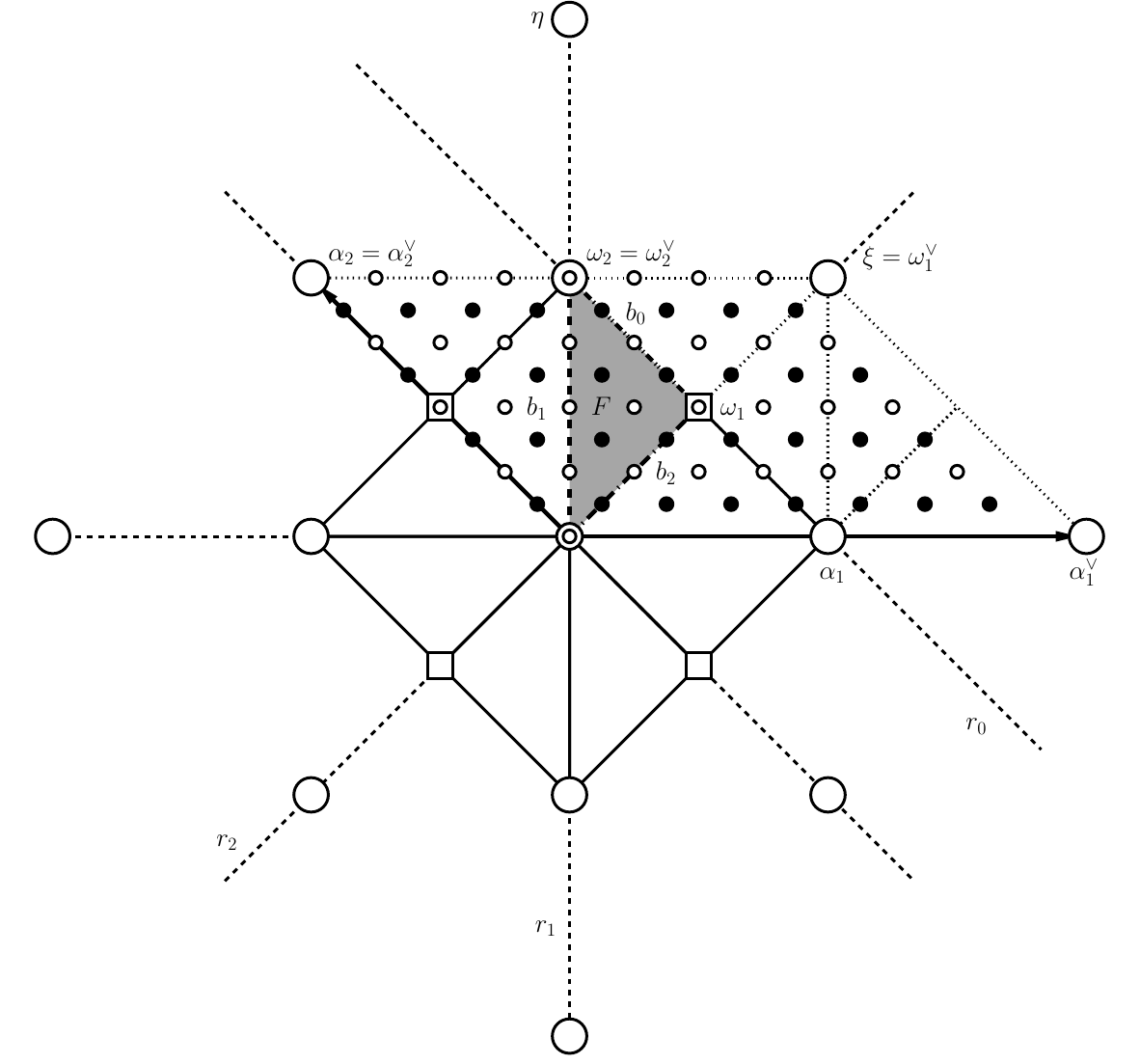}
\caption{\small The fundamental domain $F$ and its boundaries of $C_2$. The fundamental domain $F$ is depicted as the grey triangle containing borders  $b_1$, $b_2$ and $b_0$, which are depicted as the thick dashed line,  dot-and-dashed and dot-dot-and-dashed lines, respectively. The coset representatives of $\frac{1}{4}P^{\vee}/Q^{\vee}$ and the shifted representatives of cosets $\frac{1}{4}(\rho^\vee + P^{\vee})/Q^{\vee}$ are shown as $32$ white and black dots, respectively. The six black points in $F$ form the set $F^{\id}_4 (0,\rho^\vee)$. The dashed lines represent 'mirrors' $r_0,r_1$ and $r_2$. Larger circles are elements of the root lattice $Q$; together with the squares they are elements of the weight lattice~$P$.}\label{figC2}
\end{figure}
\end{example}

It is established in \cite{HP,HMP} that the numbers of elements of $F^\sigma_M (0,0)$ coincide with the numbers of elements of $\Lambda^\sigma_M (0,0)$. Here we generalize these results to include all cases of admissible shifts.
\begin{thm}\label{numAnd}
Let $\rho^\vee$, $\rho$ be any admissible and any dual admissible shifts, respectively. Then it holds that
\begin{equation*}
|\Lambda^\sigma_M(\rho,\rho^\vee)|=|F^\sigma_M(\rho,\rho^\vee)|	.
\end{equation*}
\end{thm}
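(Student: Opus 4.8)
The plan is to mirror the proof of Theorem~\ref{numAn} on the label side. Starting from the explicit description \eqref{LMrho}, I would shift every variable $t^{\sigma,\rho^\vee}_i$ that is forced into $\N$ --- those with $r_i\in R^{\sigma\vee}(\rho^\vee)$ --- down by one, and move the half-integer dual shift $\rho=\sum_j\rho_j\om_j$ over to the right-hand side of the defining linear equation. What makes this clean is that a component $\rho_j=\tfrac12$ merely relaxes a strict constraint $t_j+\tfrac12>0$ to $t_j\ge 0$, so it never forces an extra unit shift; it only contributes the term $\rho_j m^\vee_j$ to the affine constant. After this manipulation $|\Lambda^\sigma_M(\rho,\rho^\vee)|$ becomes the number of tuples $(\wt t_0,\dots,\wt t_n)\in(\Z^{\ge0})^{n+1}$ with $\wt t_0 m^\vee_0+\dots+\wt t_n m^\vee_n=M_\Lambda$ for an explicitly determined $M_\Lambda$; this count depends only on the multiset of extended dual marks, which for each admissible type in Table~\ref{Tab:shifts} coincides with the multiset of extended marks, so for $B_n$ and $C_n$ it equals $\nu_n(M_\Lambda)$ --- the same function appearing on the $F$-side (with $n=2$ in the self-dual case $C_2$).

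It then remains to match $M_\Lambda$ with the parameter $M_F$ produced by the reduction in the proof of Theorem~\ref{numAn}, for which $|F^\sigma_M(\rho,\rho^\vee)|=\nu_n(M_F)$. Both $M_\Lambda$ and $M_F$ are $M$ minus a non-negative integer assembled from two ingredients: the unit shifts spent on the strictly positive coordinates (indexed by $R^{\sigma\vee}(\rho^\vee)$ on the label side, by $R^\sigma(\rho)$ on the point side), a coordinate carrying a half-integer component costing nothing, and the contribution $\sum_j\rho_j m^\vee_j$ (resp.\ $\sum_j\rho^\vee_j m_j$) of the half-integer shift to the affine constant. By Tables~\ref{Tab:hom} and \ref{Tab:dhom} the sets $R^\sigma(\rho)$ and $R^{\sigma\vee}(\rho^\vee)$ differ from their unshifted versions $R^\sigma(0)$ and $R^{\sigma\vee}(0)$ only in the membership of $r_0$, respectively $r^\vee_0$, which is reversed once the shift is switched on, and $m_0=m^\vee_0=1$. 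For $\rho=\rho^\vee=0$ the equality $M_\Lambda=M_F$ is exactly the known identity $|F^\sigma_M(0,0)|=|\Lambda^\sigma_M(0,0)|$ from \cite{HP,HMP}; the remaining cases reduce to a short finite check running over the four sign homomorphisms and the entries of Table~\ref{Tab:shifts}, in which one evaluates the shift contribution ($\tfrac12$ times a (dual) mark, an integer outside the $A_1$ case) and confronts it, together with the $r_0$- or $r^\vee_0$-flip, on the two sides. Granting this, $|\Lambda^\sigma_M(\rho,\rho^\vee)|=\nu_n(M_\Lambda)=\nu_n(M_F)=|F^\sigma_M(\rho,\rho^\vee)|$.

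The rank-one algebra $A_1$ must be handled separately, because there the shift contribution $\tfrac12 m^\vee_1=\tfrac12$ is not an integer and all marks equal $1$; in that case I would simply enumerate the points of $\Lambda^\sigma_M(\rho,\rho^\vee)$ and of $F^\sigma_M(\rho,\rho^\vee)$ directly from \eqref{LMrho} and \eqref{FMrho}. The genuine obstacle is precisely the finite bookkeeping of the preceding paragraph: for every admissible type and every $\sigma$ one has to keep exact track of whether $r_0$ and $r^\vee_0$ lie in the relevant sign sets and of how the half-integer shift relaxes the strict positivity conditions cutting out the grid --- a single sign or off-by-one slip there would wreck the match, whereas the rest of the argument merely transcribes the reduction already used for Theorem~\ref{numAn}.
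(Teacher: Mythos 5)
Your proposal is correct and follows essentially the same route as the paper's proof: the paper likewise reduces each shifted set, via the unit shifts of the coordinates indexed by $R^\sigma(\rho)$ resp.\ $R^{\sigma\vee}(\rho^\vee)$ and absorption of the half-integer shift into the affine constant, to an unshifted defining equation (e.g.\ $|\Lambda^{\sigma^s}_M(\rho,0)|=|\Lambda^{\id}_{M-2n+1}(0,0)|$ for $C_n$), and then invokes the known zero-shift equality from \cite{HP,HMP} together with the reduction \eqref{countCn} of Theorem \ref{numAn}, handling the remaining admissible cases (including $A_1$) by the same case-by-case bookkeeping you describe. Your reformulation via matching the reduced parameters $M_\Lambda=M_F$ and the coincidence of the extended mark and dual-mark multisets is just a repackaging of that argument, at the same level of detail as the paper.
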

\begin{proof}
The equality $|\Lambda^\sigma_M(0,0)|=|F^\sigma_M(0,0)|$ is proved in \cite{HP,HMP}. For other combinations of admissible shifts is the number of points calculated for each case using formula \eqref{LMrho}.
For the case $C_n$, we have from Table \ref{Tab:dhom} that $R^{\sigma^s \vee}(0)=\{r^\vee_0,r_1,\dots, r_{n-1} \}$ and thus
\begin{align}\label{numCnd}
\Lambda^{\sigma^s}_M (\rho,0) &=\setb{t^{\sigma^s,0}_1\om_1+\dots+\left(t^{\sigma^s,0}_n+\frac{1}{2}\right)\om_n}{t^{\sigma^s,0}_0+t^{\sigma^s,0}_1 +2t^{\sigma^s,0}_{2}+\dots+2\left(t^{\sigma^s,0}_n+\frac{1}{2}\right) =M  }
\end{align}
where
\begin{equation*}
t^{\sigma^s,0}_i \in \begin{cases}\N,\q   i\in \{0,1,\dots,n-1\}, \\ \Z^{\geq 0},\q i=n. \end{cases}
\end{equation*}
Introducing new variables $\wt t^{\sigma^s,\rho}_i\in \Z^{\geq 0}$ and setting 
\begin{equation*}
t^{\sigma^s,0}_i = \begin{cases}\wt t^{\sigma^s,0}_i+1,\q   i\in \{0,1,\dots,n-1\}, \\ \wt t^{\sigma^s,\rho}_i,\q i=n. \end{cases}
\end{equation*}
the defining equation in the set \eqref{numCnd} can be rewritten as 
\begin{equation*}
\wt t^{\sigma^s,0}_0+\wt t^{\sigma^s,0}_1 +2\wt t^{\sigma^s,0}_{2}+\dots+2\wt t^{\sigma^s,0}_n =M-2n+1.	
\end{equation*}
This equation is the same as the defining equation of $\Lambda^{\id}_{M-2n+1} (0,0)$ of $C_n$ and taking into account \eqref{countCn} we obtain that $$|\Lambda^{\sigma^s}_M (\rho,0)|=|\Lambda^{\id}_{M-2n+1} (0,0)|=|F^{\id}_{M-2n+1} (0,0)|=|F^{\sigma^s}_M (\rho,0)|.$$ Similarly, we obtain the equalities for the remaining cases.
\end{proof}

\begin{example}\label{exdual}
For the Lie algebra $C_2$ and for $M=4$ is the order of the group $P/4Q$ equal to $32$, and according to Theorem \ref{numAnd} we calculate 
\begin{align*}
|\Lambda^{\id}_4 (\rho,0)|= & |\Lambda^{\id}_4 (0,\rho^\vee)|=|\Lambda^{\sigma^s}_4 (0,\rho^\vee)|=|\Lambda^{\sigma^l}_4 (\rho,0)|=6 \\
|\Lambda^{\id}_4 (\rho,\rho^\vee)|= & |\Lambda^{\sigma^e}_4 (\rho,\rho^\vee)|=|\Lambda^{\sigma^s}_4 (\rho,\rho^\vee)|=|\Lambda^{\sigma^l}_4  (\rho,\rho^\vee)|=4,\\
|\Lambda^{\sigma^e}_4 (0,\rho^\vee)|= & |\Lambda^{\sigma^e}_4 (\rho,0)|=|\Lambda^{\sigma^s}_4 (\rho,0)|=|\Lambda^{\sigma^l}_4  (0,\rho^\vee)|=2,
\end{align*}
Let us denote the boundaries of the triangle $4F^\vee$ which are stabilized by the reflection $r^\vee_i$ by $b^\vee_i$, $i\in\{0,1,2\}$. Then the sets $4H^{\sigma^\vee} (\rho^\vee)$ of boundaries \eqref{Hrhod} are the relevant boundaries of $4F^\vee$ and are given as  \begin{equation*}
\begin{alignedat}{3}
4H^{\id \vee} (\rho^\vee)&=\{b^\vee_0\}, &\q 4H^{\sigma^e \vee} (\rho^\vee)&=\{b^\vee_1,b^\vee_2\},  \\
4H^{\sigma^s \vee} (\rho^\vee)&=\{b^\vee_1\}, &\q 4H^{\sigma^l \vee} (\rho^\vee)&=\{b^\vee_0,b^\vee_2\}.
\end{alignedat}
\end{equation*}
The coset representatives of the finite group $P/4Q$, the shifted representatives of the set of cosets $(\rho+P)/4Q $  and the magnified fundamental domain $4F^\vee$ with its boundaries $\{b^\vee_0,b^\vee_1,b^\vee_2\}$ are depicted in Figure~\ref{figC2d}.
\begin{figure}%[h!]
\includegraphics{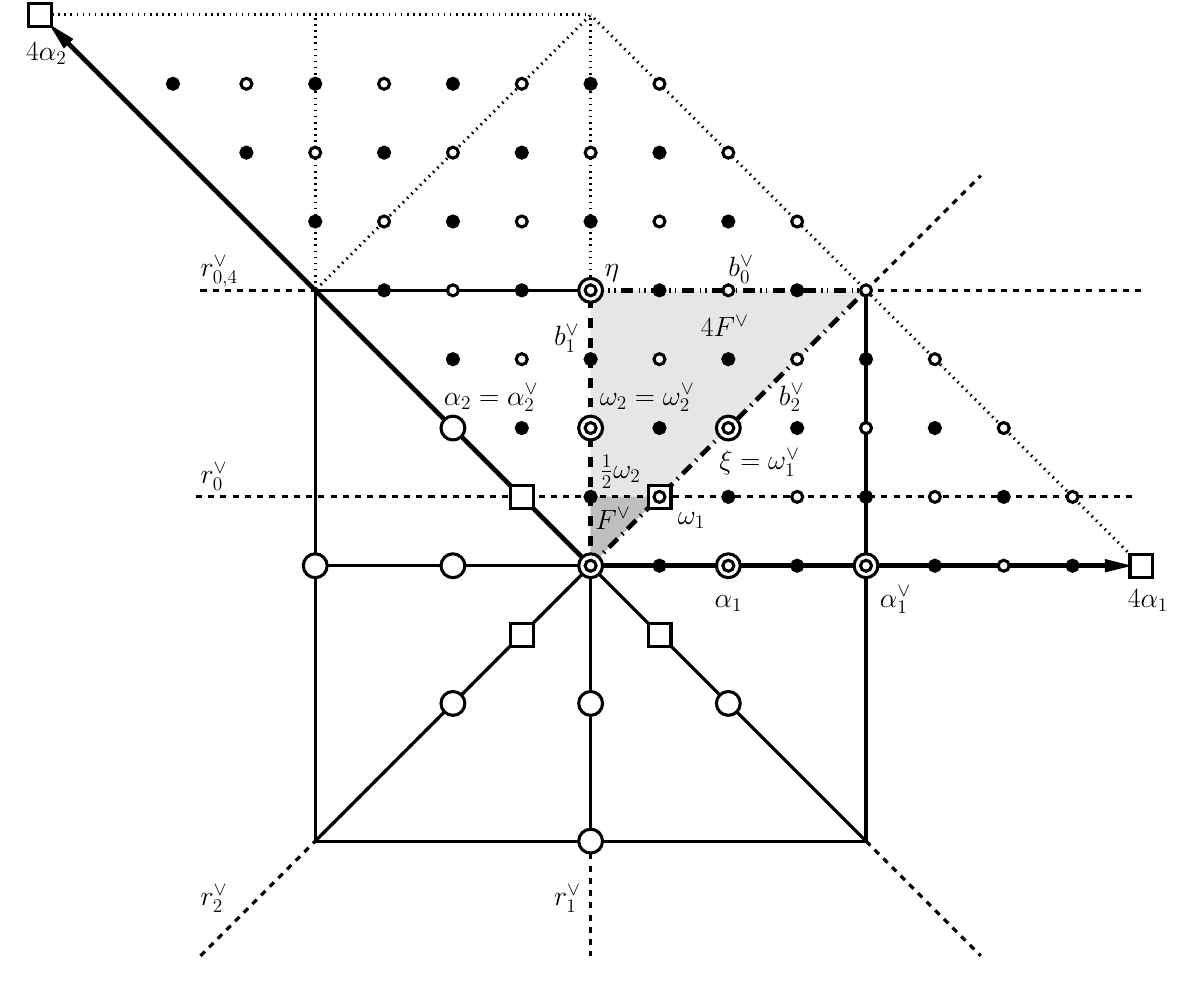}
\caption{\small The magnified fundamental domain $4F^\vee$ and its boundaries of $C_2$. The darker grey triangle is the fundamental domain $F^{\vee}$ and the lighter grey triangle is the domain $4F^{\vee}$ containing borders $b^\vee_1$, $b^\vee_2$ and $b^\vee_0$, which are depicted as the thick dashed line,  dot-and-dashed and dot-dot-and-dashed lines, respectively. The coset representatives of $P/4Q$ and the shifted representatives of cosets $(\rho+P)/4Q $ are shown as $32$ white and black dots, respectively. The six black points in $4F^\vee$ form the set $\Lambda^{\id}_4 (\rho,0)$. The dashed lines represent 'mirrors' $r^\vee_0,r^\vee_1$ and $r^\vee_2$ and the affine mirror $r^\vee_{0,4}$ is defined by $r^\vee_{0,4}\la= 4r^\vee_0(\la/4)$. The larger circles and squares coincide with those in Figure \ref{figC2}.}\label{figC2d}
\end{figure}
\end{example}
%%%%%%%%%%%%%%%%%%%%%%%%%%%%%%%%%%%%%%%%%%%%%%%%%%%%%%%%%%%%%%%%%%

%%%%%%%%%%%%%%%%%%%%%%%%%%%%%%%%%%%%%%%%%%%%%%%%%%%%%%%%%%%%%%%%%%
\section{Discrete orthogonality and transforms of orbit functions}
%%%%%%%%%%%%%%%%%%%%%%%%%%%%%%%%%%%%%%%%%%%%%%%%%%%%%%%%%%%%%%%%%%
%%%%%%%%%%%%%%%%%%%%%%%%%%%%%%%%%%%%%%%%%%%%%%%%%%%%%%%%%%%%%%%%%%
\subsection{Discrete orthogonality of orbit functions}\

To describe the discrete orthogonality of the $\phi^\sigma_b$ functions with shifted points from $F^\sigma_M (\rho,\rho^\vee)$ and shifted weights from $\Lambda^\sigma_M (\rho,\rho^\vee)$, we need to generalize the discrete orthogonality from \cite{MP2, HP, HMP} . Recall that basic discrete orthogonality relations of the exponentials from \cite{HP,MP2} imply for any $\mu \in P$ that
\begin{equation}\label{bdis}
 \sum_{y\in\frac{1}{M}P^{\vee}/Q^{\vee}} e^{2\pi i\sca{\mu}{y}}=\begin{cases} cM^n,\q \mu\in MQ  \\ 0,\q\q \mu\notin MQ.\end{cases}
\end{equation}
The scalar product of any two functions $f,g:F^\sigma_M (\rho,\rho^\vee)\map \Com$ is given by
\begin{equation} \label{scp}
\sca{f}{g}_{F^\sigma_M (\rho,\rho^\vee)}= \sum_{a\in F^\sigma_M (\rho,\rho^\vee)}\ep(a) f(a)\overline{g(a)},
\end{equation}
where the numbers $\ep (a)$ are determined by (\ref{epR}). We prove that $\Lambda^\sigma_M (\rho,\rho^\vee)$ is the lowest maximal set of pairwise orthogonal orbit functions.
\begin{thm}\label{orthothm}
Let $\rho^\vee$, $\rho$ be any admissible and any dual admissible shifts, respectively, and $\sigma$ be a sign homomorphism. Then for any $b,b' \in\Lambda_M^\sigma (\rho,\rho^\vee)$ it holds that
\begin{equation}\label{ortho}
 \sca{\phi^\sigma_b}{\phi^\sigma_{b'}}_{F^\sigma_M (\rho,\rho^\vee)}=c\abs{W}M^n h^{\vee}_M(b) \,\delta_{b,b'},
\end{equation}
where $c$, $h^{\vee}_M(b)$ were defined by (\ref{Center}), (\ref{hMb}),
respectively, $|W|$ is the number of elements of the Weyl group.
\end{thm}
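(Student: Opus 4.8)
Here is the plan I would follow.

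The strategy is to recognize the left-hand side of \eqref{ortho} as a full sum over the shifted torus $\tfrac1M(\rho^\vee+P^\vee)/Q^\vee$ and then to evaluate that sum using the exponential orthogonality \eqref{bdis}. A preliminary observation is crucial: although $\phi^\sigma_b$ with $b\in\rho+P$ need not be $Q^\vee$-periodic — Proposition \ref{diskval} only gives $\phi^\sigma_b(a+q^\vee)=e^{2\pi i\sca{q^\vee}{\rho}}\phi^\sigma_b(a)$ — the \emph{product} $\phi^\sigma_b(a)\overline{\phi^\sigma_{b'}(a)}$ of two orbit functions with labels in the \emph{same} coset $\rho+P$ is genuinely $Q^\vee$-periodic, since the two characters cancel. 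Hence the quantity
\[
S:=\sum_{x\in\frac1M(\rho^\vee+P^\vee)/Q^\vee}\phi^\sigma_b(x)\,\overline{\phi^\sigma_{b'}(x)}
\]
is well defined, and the whole argument must be organized around this product rather than the individual functions.

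First I would prove the unfolding identity $S=\sca{\phi^\sigma_b}{\phi^\sigma_{b'}}_{F^\sigma_M(\rho,\rho^\vee)}$. Since $\rho^\vee$ is admissible, $\tfrac1M(\rho^\vee+P^\vee)$ is invariant both under $W$ (by \eqref{rhoa}) and under translations from $Q^\vee$, so the torus splits into $W$-orbits; by the fundamental-domain properties \eqref{rfun1}, \eqref{rfun2}, each orbit has a unique representative lifting to some $a\in F$, and because the whole $Q^\vee$-coset through $a$ lies in $\tfrac1M(\rho^\vee+P^\vee)$, this $a$ belongs to $F\cap\tfrac1M(\rho^\vee+P^\vee)$, hence either to $H^\sigma(\rho)$ or to $F^\sigma_M(\rho,\rho^\vee)$. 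Orbits whose representative lies on $H^\sigma(\rho)$ contribute nothing, by \eqref{Fss} together with the periodicity of the product. For $a\in F^\sigma_M(\rho,\rho^\vee)$, applying \eqref{Sssym} with $w^{\mathrm{aff}}=w\in W$, so that $\gamma^\sigma_\rho(w)=\sigma(w)=\pm1$, gives $\phi^\sigma_b(wa)\overline{\phi^\sigma_{b'}(wa)}=\phi^\sigma_b(a)\overline{\phi^\sigma_{b'}(a)}$ for every $w$, and the orbit has $\ep(a)=\wt\ep(x)$ points by \eqref{ept}; summing over orbits reproduces the weighted scalar product \eqref{scp}.

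Next I would compute $S$ directly. Expanding the two orbit functions and interchanging the sums,
\[
S=\sum_{w,w'\in W}\sigma(w)\sigma(w')\sum_{x\in\frac1M(\rho^\vee+P^\vee)/Q^\vee}e^{2\pi i\sca{wb-w'b'}{x}}.
\]
By Proposition \ref{Wdreq}, $\rho-w\rho\in P$ for all $w\in W$, whence $wb-w'b'\in P$, so each inner sum is well defined on the torus; substituting $x=\tfrac{\rho^\vee}{M}+y$ with $y$ running over $\tfrac1M P^\vee/Q^\vee$ and invoking \eqref{bdis}, the inner sum equals $cM^n e^{2\pi i\sca{q}{\rho^\vee}}$ when $wb-w'b'=Mq\in MQ$ and $0$ otherwise — here $\sca{q}{\rho^\vee}\in\tfrac12\Z$ because $q\in Q$ and $\rho^\vee\in\tfrac12 P^\vee$, so the phase lies in $\{\pm1\}$. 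If $b\neq b'$, then $wb\equiv w'b'\pmod{MQ}$ with $b,b'\in MF^\vee$ forces $b=b'$ by the $M$-rescaled version of \eqref{dfun2}, a contradiction; hence $S=0$ in that case, which accounts for $\delta_{b,b'}$. For $b=b'$ the surviving pairs are exactly $(w,wv)$ with $w\in W$ and $v$ in the stabilizer $\mathrm{Stab}^\vee(b+MQ)=\{v\in W:vb\equiv b\pmod{MQ}\}$; writing $vb=b+Ms_v$ with $s_v\in Q$ gives $q=-ws_v$, and since $\rho^\vee-w^{-1}\rho^\vee\in P^\vee$ one has $e^{2\pi i\sca{q}{\rho^\vee}}=e^{-2\pi i\sca{s_v}{\rho^\vee}}$, independent of $w$, while $\sigma(w)\sigma(wv)=\sigma(v)$. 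The decisive identification is that
\[
\sigma(v)\,e^{-2\pi i\sca{s_v}{\rho^\vee}}=\widehat\gamma^\sigma_{\rho^\vee}\bigl(T(-s_v)\,v\bigr),
\]
that $T(-s_v)v$ fixes $b/M$, and that $v\mapsto T(-s_v)v$ is a bijection of $\mathrm{Stab}^\vee(b+MQ)$ onto $\mathrm{Stab}_{\widehat{W}^{\mathrm{aff}}}(b/M)$ (compatibly with $\widehat\psi$, cf.\ \eqref{rfunstab2}). Consequently $S=cM^n|W|\sum_{g\in\mathrm{Stab}_{\widehat{W}^{\mathrm{aff}}}(b/M)}\widehat\gamma^\sigma_{\rho^\vee}(g)$; since $b/M\in F^{\sigma\vee}(\rho^\vee)$, the defining condition \eqref{FsFldual} makes every summand equal $1$, so $S=cM^n|W|h^\vee_M(b)$ by \eqref{hMb}. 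Equating the two evaluations of $S$ yields \eqref{ortho}.

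The main obstacle I anticipate is the bookkeeping around the shift phase $e^{2\pi i\sca{q}{\rho^\vee}}$: one must check that it lies in $\{\pm1\}$, that it is independent of which $w$ is chosen within a surviving pair $(w,wv)$, and — the crucial step — that multiplied by $\sigma(v)$ it assembles \emph{exactly} into the value of the dual $\gamma$-homomorphism $\widehat\gamma^\sigma_{\rho^\vee}$ on the stabilizer of $b/M$, which is precisely what makes the stabilizer sum collapse to $h^\vee_M(b)$. A secondary technical point, absent in the unshifted case of \cite{HP,HMP}, is that the orbit functions are only $Q^\vee$-quasi-periodic, so the unfolding must be carried out using the genuinely $Q^\vee$-periodic product $\phi^\sigma_b\overline{\phi^\sigma_{b'}}$ and not the functions individually.
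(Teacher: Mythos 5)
Your proposal is correct and follows essentially the same route as the paper: unfold the weighted sum over $F^\sigma_M(\rho,\rho^\vee)$ to the shifted torus $\tfrac1M(\rho^\vee+P^\vee)/Q^\vee$ using the vanishing on $H^\sigma(\rho)$ and the $W^{\mathrm{aff}}$-invariance of $\ep(a)\phi^\sigma_b(a)\overline{\phi^\sigma_{b'}(a)}$, apply \eqref{bdis} after splitting off the $\rho^\vee$-phase, kill the off-diagonal case via the rescaled \eqref{dfun2}, and identify the surviving stabilizer sum with values of $\widehat\gamma^\sigma_{\rho^\vee}$, which equal $1$ by \eqref{FsFldual}, giving $h^\vee_M(b)$. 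The only (immaterial) difference is bookkeeping: the paper collapses the double Weyl sum to $|W|$ times a single sum via $W$-invariance of the shifted torus, whereas you pair surviving terms $(w,wv)$ directly and check the phase is independent of $w$.
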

\begin{proof}
Since due to \eqref{Fss} any $\phi^\sigma_b$ vanishes on $H^\sigma (\rho)$ and it holds that
$$F^\sigma_M (\rho,\rho^\vee) \cup  \left[\frac{1}{M}(\rho^\vee+P^{\vee})\cap H^\sigma (\rho) \right]  =  \left[\frac{1}{M}(\rho^\vee+P^{\vee})\right]\cap F, $$ we have
\begin{equation*}
\sca{\phi^\sigma_b}{\phi^\sigma_{b'}}_{F^\sigma_M (\rho,\rho^\vee)}= \sum_{a\in F^\sigma_M (\rho,\rho^\vee)}\ep(a) \phi^\sigma_b(a)\overline{\phi^\sigma_{b'}(a)}=\sum_{a\in \left[\frac{1}{M}(\rho^\vee+P^{\vee})\right]\cap F}\ep(a) \phi^\sigma_b(a)\overline{\phi^\sigma_{b'}(a)}
\end{equation*}
Combining the relations \eqref{epshift} and \eqref{Sssym}, we observe that the expression $\ep(a) \phi^\sigma_b(a)\overline{\phi^\sigma_{b'}(a)}$ is ${W}^{\mathrm{aff}}-$invariant; the shift invariance with respect to $Q^\vee$, together with \eqref{ept}, implies that 
$$\sum_{a\in \left[\frac{1}{M}(\rho^\vee+P^{\vee})\right]\cap F}\ep(a) \phi^\sigma_b(a)\overline{\phi^\sigma_{b'}(a)}= \sum_{x\in \left[\frac{1}{M}(\rho^\vee+P^{\vee})/Q^\vee\right]\cap F}\wt\ep(x) \phi^\sigma_b(x)\overline{\phi^\sigma_{b'}(x)}
$$
and taking into account also $W-$invariance of $\wt\ep(x) \phi^\sigma_b(x)\overline{\phi^\sigma_{b'}(x)}$ together with \eqref{rfun1} and \eqref{rfun2}, we obtain
\begin{equation*}
\sum_{x\in \left[\frac{1}{M}(\rho^\vee+P^{\vee})/Q^\vee\right]\cap F}\wt\ep(x) \phi^\sigma_b(x)\overline{\phi^\sigma_{b'}(x)}= \sum_{y \in \frac{1}{M}(\rho^\vee+P^{\vee})/Q^\vee} \phi^\sigma_b(y)\overline{\phi^\sigma_{b'}(y)}.
\end{equation*}
Using the $W-$invariance of $\frac{1}{M}(\rho^\vee+P^{\vee})/Q^\vee$, which follows from \eqref{rhoa}, we obtain
\begin{align}
\sca{\phi^\sigma_b}{\phi^\sigma_{b'}}_{F^\sigma_M (\rho,\rho^\vee)} = &\sum_{w'\in W}\sum_{w\in W} \sum_{y\in \frac{1}{M}(\rho^\vee+P^{\vee})/Q^\vee}\sigma(ww')e^{2\pi\i\sca{wb-w'b'}{y}}\nonumber \\=&\abs{W}\sum_{w'\in W}\sum_{y \in \frac{1}{M}(\rho^\vee+P^{\vee})/Q^\vee}\sigma(w')e^{2\pi\i\sca{b-w'b'}{y}} \nonumber\\ 
=&\abs{W}\sum_{w'\in W}\sigma(w')e^{2\pi\i\sca{b-w'b'}{\frac{\rho^\vee}{M}}}\sum_{y \in \frac{1}{M}P^{\vee}/Q^\vee}e^{2\pi\i\sca{b-w'b'}{y}}. \nonumber
\end{align}
Having the labels $b$ and $b'$ of the form $b=\rho+\la$ and $b'=\rho+\la'$, with $\la,\la'\in P$, we obtain from Proposition \ref{Wdreq} that
$$ b-w'b' = \rho- w'\rho +\la -\la'\in P.
$$
Thus, the basic orthogonality relation \eqref{bdis} can be used. Taking into account that \eqref{dfun2} together with $b,b'\in MF^\vee$ and $b-w'b' \in MQ $, for some $w'\in W$, forces $b=b'$, we observe that
if $b\neq b'$ then for all $w'\in W$ it has to hold that $ b-w'b' \notin MQ$. We conclude that if $b\neq b'$ then \eqref{bdis} implies $\sca{\phi^\sigma_b}{\phi^\sigma_{b'}}_{F^\sigma_M (\rho,\rho^\vee)}=0$. On the other hand if $b=b'$ then \eqref{bdis} forces all summands  for which $ b-w'b \notin MQ$, i.e. $w' \notin \widehat\psi(\mathrm{Stab}_{\widehat{W}^{\mathrm{aff}}}\left(\frac{b}{M}\right)) $ to vanish. Therefore we finally obtain
\begin{align}
\sca{\phi^\sigma_b}{\phi^\sigma_{b}}_{F^\sigma_M (\rho,\rho^\vee)} 
=& c\abs{W}M^n\sum_{w'\in \widehat\psi\left[\mathrm{Stab}_{\widehat{W}^{\mathrm{aff}}}\left(\frac{b}{M}\right)\right]}\sigma(w')e^{2\pi\i\sca{b-w'b}{\frac{\rho^\vee}{M}}}\label{usehom} \\ =&
c\abs{W}M^n\sum_{w^{\mathrm{aff}}\in \mathrm{Stab}_{\widehat{W}^{\mathrm{aff}}}\left(\frac{b}{M}\right)}\sigma(\widehat\psi[w^{\mathrm{aff}}] )e^{2\pi\i\sca{\widehat\tau[w^{\mathrm{aff}}]}{\rho^\vee}} \nonumber\\ = & c\abs{W}M^n\sum_{w^{\mathrm{aff}}\in \mathrm{Stab}_{\widehat{W}^{\mathrm{aff}}}\left(\frac{b}{M}\right)} \widehat\gamma^\sigma_{\rho^\vee} (w^{\mathrm{aff}}) \nonumber
\end{align} 
and taking into account \eqref{FsFldual} and definition \eqref{hMb}, we have
$$\sum_{w^{\mathrm{aff}}\in \mathrm{Stab}_{\widehat{W}^{\mathrm{aff}}}\left(\frac{b}{M}\right)} \widehat\gamma^\sigma_{\rho^\vee} (w^{\mathrm{aff}})=h^{\vee}_M(b).$$ 
\end{proof}

\begin{example}
Consider the root system of $C_2$ ---
its highest root $\xi$ and the highest dual root $\eta$, depicted in Figure \ref{figC2}, are given by
$$\xi=2\al_1+\al_2,\ \ \eta=\al^\vee_1+2\al^\vee_2.$$
The Weyl group has eight elements, $|W|=8$, and the determinant of the Cartan matrix is $c=2$. We fix a sign homomorphism to $\sigma= \sigma^s$ and
take a parameter with coordinates in $\omega-$basis $(a,b)$ and a point with coordinates in $\alpha^\vee$-basis $(x,y)$. Then the $S^{s}-$functions are of the following explicit form \cite{HMP}, 
\begin{align*}
\phi^{\sigma^s}_{(a,b)}(x,y)= & 2\{ \cos(2\pi((a+2b)x-by))+\cos(2\pi(ax+by)) \\ &-\cos(2\pi((a+2b)x-(a+b)y))-\cos(2\pi(ax-(a+b)y))\}.
\end{align*}
Fixing both admissible shifts to the non-zero values from Table \ref{Tab:shifts}, the grid $F^{\sigma^s}_M (\rho,\rho^\vee)$ is of the form
\begin{align}\label{FC2}
F^{\sigma^s}_M (\rho,\rho^\vee) =& \setb{\frac{u_1+\frac{1}{2}}{M}\om^{\vee}_1+\frac{u_2}{M}\om^{\vee}_2}{u_1,\,u_2\in \Z^{\geq 0},\, u_0\in\N,\, u_0+2u_1+u_2=M} 
\end{align}
and the grid $\Lambda^{\sigma^s}_M (\rho,\rho^\vee)$ is of the form 
\begin{align}\label{LC2}
\Lambda^{\sigma^s}_M (\rho,\rho^\vee)=& \setb{t_1\om_1+\left(t_2+\frac{1}{2}\right)\om_2}{t_0, t_2\in \Z^{\geq 0}, t_1\in\N,\, t_0+t_1+2t_2=M}.
\end{align}
For calculation of the coefficients $\ep(x)$, $h^{\vee}_M(b)$, which appear in \eqref{scp} and \eqref{ortho}, a straightforward generalization of the calculation procedure, described in \S 3.7 in \cite{HP}, is used. Each point $a\in F^{\sigma^s}_M (\rho,\rho^\vee) $ and each label $b\in \Lambda^{\sigma^s}_M (\rho,\rho^\vee)$ are assigned the coordinates $u_0,u_1,u_2$ and $t_0,t_1,t_2$ from \eqref{FC2} and \eqref{LC2}, respectively, and the triples which enter the algorithm in \cite{HP} are $[u_0,u_1+\frac{1}{2},u_2]$ and $[t_0,t_1,t_2+\frac{1}{2}]$. Thus, the values $\ep(a)$ of the point represented by $u_0,u_1,u_2$ are $\ep(a)=8$ if $u_2\neq 0$ and $\ep(a)=4$ if $u_2=0$. The values $h^\vee_M(b)$ of the point represented by $t_0,t_1,t_2$ are $h^\vee_M(b)=1$ if $t_0\neq 0$ and $h^\vee_M(b)=2$ if $t_0=0$ and therefore the relations orthogonality \eqref{ortho} are of the form 
\begin{equation*}
 \sca{\phi^{\sigma^s}_b}{\phi^{\sigma^s}_{b'}}_{F^{\sigma^s}_M (\rho,\rho^\vee)}=16M^2 h^{\vee}_M(b) \,\delta_{b,b'}.
\end{equation*}
\end{example}

\subsection{Discrete trigonometric transforms}\

Similarly to ordinary Fourier analysis, interpolating functions $I^\sigma_M (\rho,\rho^\vee)$
\begin{align}
I^\sigma_M (\rho,\rho^\vee)(a)= \sum_{b\in \Lambda^{\sigma}_M (\rho,\rho^\vee)} c^{\sigma}_b \phi^{\sigma}_b(a),\q a\in \R^n \label{intc}
\end{align}
are given in terms of expansion functions $\phi^\sigma_b$ and unknown expansion coefficients $c^{\sigma}_b$.
For some function $f$ sampled on the grid $F^{\sigma^s}_M (\rho,\rho^\vee)$, the interpolation of $f$ consists of finding the coefficients $c^{\sigma}_b$ in the interpolating functions (\ref{intc}) such that it coincides with $I^\sigma_M (\rho,\rho^\vee)$ at all gridpoints, i.e.  
\begin{align*}
I^\sigma_M (\rho,\rho^\vee)(a)=& f(a), \q a\in F^{\sigma}_M (\rho,\rho^\vee).
\end{align*}
The coefficients $c^{\sigma}_b$ are due to Theorem \ref{numAnd} uniquely determined via the standard methods of calculation of Fourier coefficients
\begin{align}\label{trans}
c^{\sigma}_b=& \frac{\sca{f}{\phi^\sigma_b}_{F^{\sigma}_M (\rho,\rho^\vee)}}{\sca{\phi^\sigma_b}{\phi^\sigma_b}_{F^{\sigma}_M (\rho,\rho^\vee)}}=(c\abs{W} M^n h^{\vee}_M(b))^{-1}\sum_{a\in F^{\sigma}_M (\rho,\rho^\vee)}\ep(a) f(a)\overline{\phi^\sigma_b(a)}
\end{align}
and the corresponding Plancherel formulas are of the form
\begin{align*}
\sum_{a\in F^{\sigma}_M (\rho,\rho^\vee)} \ep(a)\abs{f(a)}^2 = & c \abs{W} M^n \sum_{b\in\Lambda^{\sigma}_M(\rho,\rho^\vee)}h^{\vee}_M(b)|c^\sigma_b|^2 .
\end{align*}

%%%%%%%%%%%%%%%%%%%%%%%%%%%%%%%%%%%%%%%
\section{Concluding Remarks}
\begin{itemize}
\item The discrete orthogonality relations of the orbit functions in Theorem \ref{orthothm} and the corresponding discrete transforms \eqref{trans} contain as a special case a compact formulation of the previous results --- definition \eqref{Fs} corresponds for zero shifts to the definitions of the four sets $F_M, \wt F_M, F^s_M, F^l_M$ in \cite{HP,HMP} and the same holds for the sets of weights. Except for Theorem \ref{numAnd}, the proofs are also developed in a uniform form.  However, the approach to the proof of Theorem \ref{numAnd}, which states the crucial fact of the completeness of the obtained sets of functions, still relies on the case by case analysis of the numbers of points which is done for trivial shifts in \cite{HP,HMP}. 
\item To any complex function $f$, defined on $F^\sigma(\rho)$, is by relations \eqref{intc} and \eqref{trans} assigned a functional series $\{I^\sigma_M (\rho,\rho^\vee)\}_{M=1}^\infty$.  
Existence of conditions for convergence of these functional series together with an estimate of the interpolation error $\int |f -I^\sigma_M (\rho,\rho^\vee) |^2$ poses an open problem.
\item The four standard discrete cosine transforms DCT--I,..., DCT--IV and the four sine transforms DST--I,..., DST--IV from \cite{Brit} are included as a special cases of \eqref{trans} corresponding to the algebra $A_1$ with its two sign homomorphisms and two admissible shifts. The case of $C_2$ appears to have exceptionally rich outcome of the shifted transforms --- the twelve new transforms will be detailed in a separated article.  Note that Chebyshev polynomials of one variable of the third and fourth kinds \cite{Hand} are induced using the dual admissible shifts of $A_1$. This indicates that similar families of orthogonal polynomials might be generated for higher-rank cases.   
\item Since the four types of orbit functions generate special cases of Macdonald polynomials, it can be expected that the orthogonality relations \eqref{ortho}, parametrized by two independent admissible shifts, will translate into a generalization of discrete polynomial orthogonality relations from \cite{DE}. 
The most ubiquitous of the four types of orbit functions is the family of $\phi^{\sigma^e}-$functions --- these antisymmetric orbit functions are constituents of the Weyl character formula and generate Schur polynomials. Their discrete orthogonality, discussed e. g. in \cite{Kir,Kac,D}, is also generalized for the admissible cases. Discrete orthogonality relations induce the possibility of deriving so called cubatures formulas \cite{MP4}, which evaluate exactly multidimensional integrals of polynomials of a certain degree, and generates polynomial interpolation methods \cite{RM}. Especially in the case of $C_2$ cubature formulas, which are developed in \cite{xuc2}, several new such formulas may be expected.    
\item Another family of special functions, which can be investigated for having similar shifting properties, is the set of so called $E-$functions \cite{HP2,KP3}. In this case, the condition \eqref{rhoa} needs to be weakened to the even subgroup of $W$ only and can be expected to produce discrete Fourier analysis on new types of grids; indeed this requirement is trivial for the case of $A_1$ and even allows an arbitrary shift. 
\item A natural question arises if the restriction \eqref{rhoa} could be weakened to obtain even richer outcome of the shifted transforms for the present case. For the analysis on the dual weights grid $P^\vee$ this, however, does not seem to be straightforwardly possible --- condition of $W-$invariance \eqref{rhoa} is equivalent to the existence of the shift homomorphism \eqref{shifthom}, which enters the definitions of the fundamental domain. By inducing sign changes the shift homomorphism controls the boundary conditions on the affine boundary of the fundamental domain and allows to evaluate expression \eqref{usehom}. Thus, the approach in the present work may serve as a starting point for further research on developing the discrete Fourier analysis of orbit functions on different types of grids.     
\end{itemize}

%%%%%%%%%%%%%%%%%%%%%%%%%%%%%%%%%%%%%%%%%%%%%
\section*{Acknowledgments}
The authors are grateful for partial support by the project 7AMB13PL035-8816/R13/R14. JH gratefully acknowledges the support of this work by RVO68407700.

\end{document}